\newcommand*{\vaff}{V_\textsc{aff}}
\newcommand*{\vcpaff}{V_\textsc{aff+}}
\newcommand*{\dlm}{d^L_G}
\newcommand*{\dlmp}{d^L_{G'}}
\newcommand*{\dmax}{\beta}
\newcommand*{\lml}[1]{|{#1}|_\textsc{l}}
\newcommand*{\lmf}{l}
\newcommand*{\delf}{e}
\newcommand*{\true}{\text{\normalfont True}}
\newcommand*{\false}{\text{\normalfont False}}
\tikzstyle{vertex}=[circle,fill=black!10,minimum size=12pt,inner sep=0pt]
\tikzstyle{edge} = [draw,thick]
\tikzset{top/.style={baseline=(current bounding box.north)}}
\tikzset{mid/.style={baseline=(current bounding box.center)}}
\newcommand{\tikzcaption}[1]{\node[below=4mm] at (current bounding box.base) {#1}}
\newtheorem{theorem}{Theorem}[section]
\newtheorem{lemma}[theorem]{Lemma}
\newtheorem{example}[theorem]{Example}
\newtheorem{definition}[theorem]{Definition}
\newcommand*{\n}[1]{\textcolor{black}{#1}}
  \providecommand\BibTeX{{%
    \normalfont B\kern-0.5em{\scshape i\kern-0.25em b}\kern-0.8em\TeX}}}
\begin{document}
\fancyhead{}

\title{BatchHL: Answering Distance Queries on Batch-Dynamic Networks at Scale}


\author{Muhammad Farhan}
\affiliation{%
  \institution{Australian National University}
  \city{Canberra}
  \country{Australia}}
\email{muhammad.farhan@anu.edu.au}

\author{Qing Wang}
\affiliation{%
  \institution{Australian National University}
  \streetaddress{1 Th{\o}rv{\"a}ld Circle}
  \city{Canberra}
  \country{Australia}}
\email{qing.wang@anu.edu.au}

\author{Henning Koehler}
\affiliation{%
  \institution{Massey University}
  \streetaddress{1 Th{\o}rv{\"a}ld Circle}
  \city{Palmerston North}
  \country{New Zealand}}
\email{H.Koehler@massey.ac.nz}

\begin{abstract}
Many real-world applications operate on dynamic graphs that undergo rapid changes in their topological structure over time. However, \n{it is challenging to design dynamic algorithms that are capable of supporting such graph changes efficiently. To circumvent the challenge, we propose a batch-dynamic framework for answering distance queries, which combines offline labelling and online searching to leverage the advantages from both sides - accelerating query processing through a partial distance labelling that is of limited size but provides a good approximation to bound online searches. We devise batch-dynamic algorithms to dynamize a distance labelling efficiently in order to reflect batch updates on the underlying graph.} In addition to providing theoretical analysis for the correctness, labelling minimality, and \n{computational complexity}, we have conducted experiments on \n{14 real-world networks to empirically verify the efficiency and scalability of the proposed algorithms.}
\end{abstract}

\begin{CCSXML}
<ccs2012>
<concept>
<concept_id>10003752</concept_id>
<concept_desc>Theory of computation</concept_desc>
<concept_significance>500</concept_significance>
</concept>
<concept>
<concept_id>10003752.10010070.10010111.10011710</concept_id>
<concept_desc>Theory of computation~Data structures and algorithms for data management</concept_desc>
<concept_significance>500</concept_significance>
</concept>
</ccs2012>
\end{CCSXML}

\ccsdesc[500]{Theory of computation}
\ccsdesc[500]{Theory of computation~Data structures and algorithms for data management}

\keywords{Shortest-path distance; batch-dynamic graphs; 2-hop cover; highway cover; distance labelling maintenance; graph algorithms}


\maketitle

\section{Introduction}

Graphs in real-world applications are typically dynamic, undergoing discrete changes in their topological structure by either adding or deleting edges and vertices. However, due to the rapid nature of data acquisition, it is often unrealistic to process single changes sequentially on graphs. Rather, updates may be aggregated in batches, and graphs are updated by large batches of updates \cite{dhulipala2020parallel}.

\vspace{0.1cm}
\noindent\textbf{Applications.~}There are various real-world applications operating on graphs that undergo rapid changes \cite{ukkonen2008searching,vieira2007efficient,backstrom2006group,boccaletti2006complex}, such as communication networks, context-aware search in web graphs \cite{potamias2009fast,ukkonen2008searching}, social network analysis in social networks \cite{vieira2007efficient,backstrom2006group}, route-planning in road networks \cite{abraham2012hierarchical,delling2014robust}, management of resources in computer networks \cite{boccaletti2006complex}, and so on.
We discuss a few examples below.
\begin{itemize}[leftmargin=*]
    \item In communication networks, links between network devices (e.g. routers) may become slow or broken due to congestion of information flow over a network or a deadly fault in a network device. Efficient maintenance of shortest paths to reflect the underlying changes helps vendors to activate new links and preserve the quality of their service \cite{boccaletti2006complex}.
    \item In social networks, Twitter is highly dynamic~\cite{myers2014bursty} -- about 9\% of all connections change in a month. Users having 100 followers on average were found to obtain 10\% more new followers but lose about 3\% of existing followers in a given month. Distance information is often used to recommend the relevant content or new connections~\cite{yahia2008efficient,vieira2007efficient}.
\end{itemize}

Although the batch-dynamic setting is increasingly important and desired for real-world applications, it poses significant challenges on algorithm design due to the combinatorial explosion of different interactions possibly occurring among updates.
Very recently, several batch-dynamic algorithms have been reported, mostly focusing on traditional graph problems such as graph connectivity \cite{acar2019parallel}, dynamic trees \cite{acar2020parallel} and k-clique counting \cite{dhulipala2020parallel-clique}. As of yet, batch-dynamic algorithms for shortest-path distance have been left unexplored, despite the fact that 
computing the distance between an arbitrary pair of vertices (i.e., \emph{distance queries}) is a fundamental problem in many real-world applications.
Up to now, only several dynamic labelling algorithms for distance queries have been studied in the single-update setting, which handles one single update (e.g., edge insertion or edge deletion) at a time \cite{akiba2014dynamic,d2019fully,qin2017efficient,hayashi2016fully}. Unlike previous studies, in this work, we are interested in exploring the following research questions:
\begin{itemize}[leftmargin=*]
    \item[--] Is it possible to design batch-dynamic algorithms for distance queries, which can efficiently reflect batch updates on graphs? 
    \item[--] Can such batch-dynamic algorithms offer significant performance gains in comparison with state-of-the-art algorithms in the single-update setting? 
    \item[--] Can we parallelize such batch-dynamic algorithms to further boost performance in a parallel setting, whenever parallel computing resources are available?
\end{itemize}
\begin{figure*}[ht]
\centering
\includegraphics[width=0.99\textwidth]{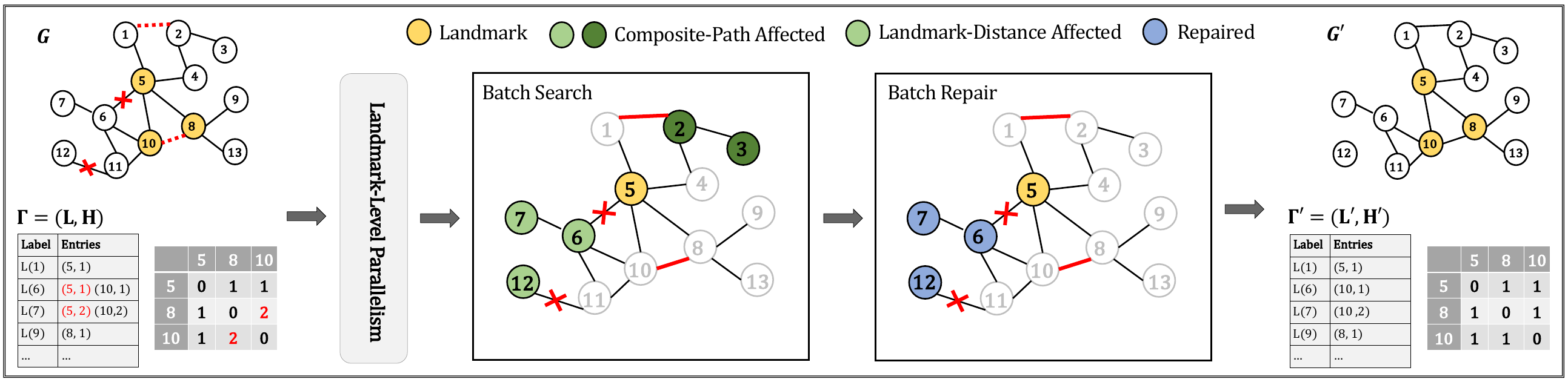}
\vspace*{-0.3cm}
\caption{A high-level overview of our batch-dynamic method (BatchHL) which performs a batch update in two phases: 1) Batch Search: find vertices that are affected, and 2) Batch Repair: repair vertices returned by Batch Search.}\label{fig:overview}
\label{fig:intro}\vspace*{-0.4cm}
\end{figure*}
\noindent\textbf{Present work.~}\n{The goal of this work is to answer the aforementioned research questions on complex networks (e.g., social networks and web graphs). It is known that complex networks exhibit different properties (e.g., small diameter) from road networks \cite{akiba2013fast}. Specifically, we thus aim to develop an efficient and scalable solution for answering distance queries on complex networks that undergo batch updates.} This solution should have the following properties: (1) \emph{Time efficiency}: it can quickly answer distance queries in a way that reflects batch updates on graphs, e.g., microseconds for large-scale graphs, since distances are used as a building block in many graph analysis tasks; (2) \emph{Space efficiency}: it can use an efficient data structure for storing distance labellings, which ideally grows linearly or sublinearly with the number of vertices in a graph; (3) \emph{Scalability}: it can scale to large graphs with millions of vertices and edges without compromising query and update performance. 

To derive these properties, firstly, we choose to combine offline labelling and online searching so as to leverage the advantages from both sides - accelerating query processing through \n{a \emph{partial distance labelling} that is of limited size but provides a good approximation} to bound online searches. Traditional labelling methods such as pruned landmark labelling (PLL) \cite{akiba2013fast} can efficiently answer distance queries \n{using a \emph{full distance labelling};} however, their labelling size grows quadratically with the size of a graph and the computational cost of updating such labellings to reflect rapid changes is often unbearably high. Hence, we consider to \n{use a partial distance labelling} for providing an upper bound for online search (with theoretical guarantees for exact answers, which will be discussed in Section \ref{sec:BDL}). This brings two significant computational benefits: (i) labelling construction can scale to very large graphs; (ii) labelling maintenance can be efficiently handled on dynamic graphs.

We propose a batch-dynamic method, BatchHL, to dynamize distance labellings efficiently in order to reflect large batches of updates on a graph. BatchHL consists of two phases: (1) \emph{Batch search} finds vertices whose labels are affected by batch updates; (2) \emph{Batch repair} changes the labels of affected vertices to ensure correctness and minimality of labelling. The main challenges are  the following:
\begin{itemize}[leftmargin=*]
    \item[--] \emph{Unifying edge insertion and deletion:} We explore the core properties shared by edge insertion and edge deletion. Based on this, we unearth an elegant pattern that unifies these two fundamental kinds of graph updates.
    \item[--] \emph{Avoiding unnecessary and repeated computation:} We analyse how updates interact with each other, and based on that, design pruning rules to reduce search and repair spaces so as to leverage the computational efficiency of batch updates.  
   \item[--] \emph{Exploiting the potential of parallelism:} We consider to parallelize batch search and batch repair in a simple but easy-to-implement way to speedup the performance.
\end{itemize}
To the best of our knowledge, this is the first study to develop a batch-dynamic solution for answering distance queries on large-scale graphs. Figure~\ref{fig:overview} presents the high-level overview of BatchHL which performs batch search and then batch repair. Figure~\ref{fig:affected-vertices} shows the gaps in the number of vertices affected by batch updates when different variants of our method are used in the batch-update setting, in comparison with the single-update setting. Notice that the number of affected vertices in the single-update setting (i.e., UHL) is much higher than the ones in the batch-update setting (i.e., BHL$^s$, BHL and BHL$^+$). This is because one vertex may be affected by multiple updates in a batch, which would unavoidably lead to repeated and unnecessary computations in the single-update setting.

\smallskip
\noindent{\textbf{Contributions.}} The main contributions of this work are as follows:
\begin{itemize}[leftmargin=*]
\item We propose a batch-dynamic method which can handle batch updates efficiently and uniformly so as to reflect batch updates on graphs into a highway cover labelling. Previous studies \cite{akiba2014dynamic,d2019fully} reported that handling edge deletions on a graph has been recognized as being computational expensive and difficult, even in the single-update setting. Our method alleviates this challenge and can handle both edge insertions and edge deletions in batches efficiently.\looseness=-1
\item We develop efficient pruning strategies in our method, i.e., in both batch search and batch repair, to eliminate repeated and unnecessary computations on graphs.
As a result, when dealing with batch updates, we traverse much smaller numbers of vertices than in the single-update setting where each update is handled independently. We also design an inference mechanism to compute new distances based on boundary vertices and incorporate this into batch repair in our method.
\item We prove that the proposed method can preserve the minimality of labelling on batch-dynamic graphs. Notice that, as discussed in \cite{d2019fully}, minimality is a difficult but highly desirable property to have for designing a distance labelling over dynamic graphs. Otherwise, a distance labelling may have increasingly unnecessary entries left in its labels and query performance would deteriorate over time.\looseness=-1
\item Our proposed method can scale to very large dynamic graphs. This is due to several reasons: the design choices on combining offline labelling and online searching, the properties of highway cover labelling, the pruning strategies in batch search and batch repair, and landmark-based parallelism. We will discuss these in detail in Section \ref{sec:experiments}. 
\end{itemize}
We have evaluated our method on \n{14 real-world networks} to verify their efficiency and scalability. The results show that our method significantly improves both time and space efficiency compared to the state-of-the-art methods. It can maintain a very small labelling size, while still answering queries in the order of milliseconds, even on large-scale graphs with several billions of edges that undergo large batch updates. For example, the average distance query time for the UK dataset with 3.7 billions of edges is around 1 millisecond, and the average update time for each update is 14.45 milliseconds; \n{similarly, the average distance query time for Twitter dataset is 0.86 millisecond and the average update time is 13.29 milliseconds, more than 300 times faster than the state-of-the-art method.}

\begin{figure}[t!]
\centering
\includegraphics[scale=0.47]{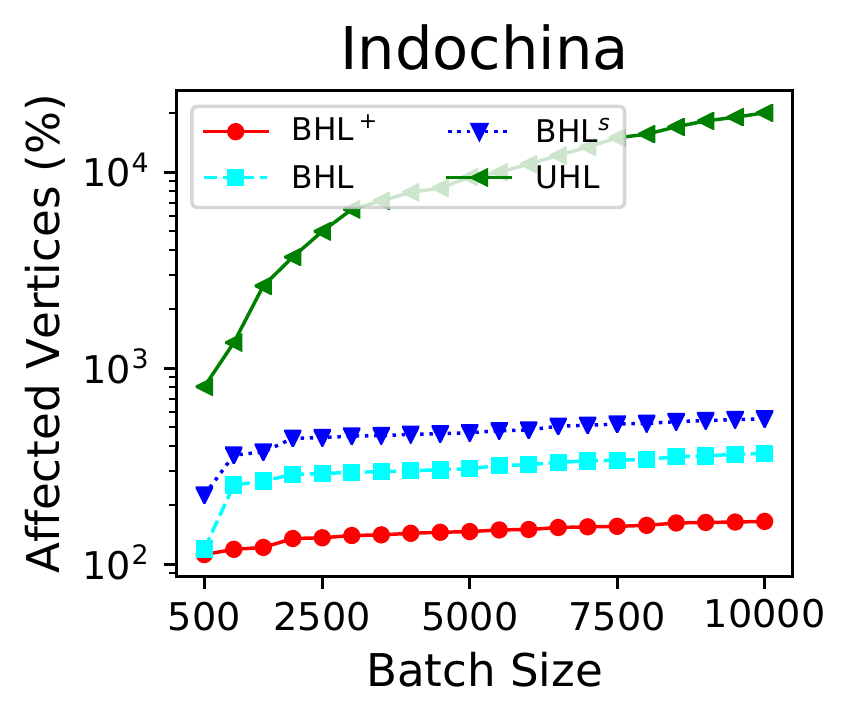}
\includegraphics[scale=0.47]{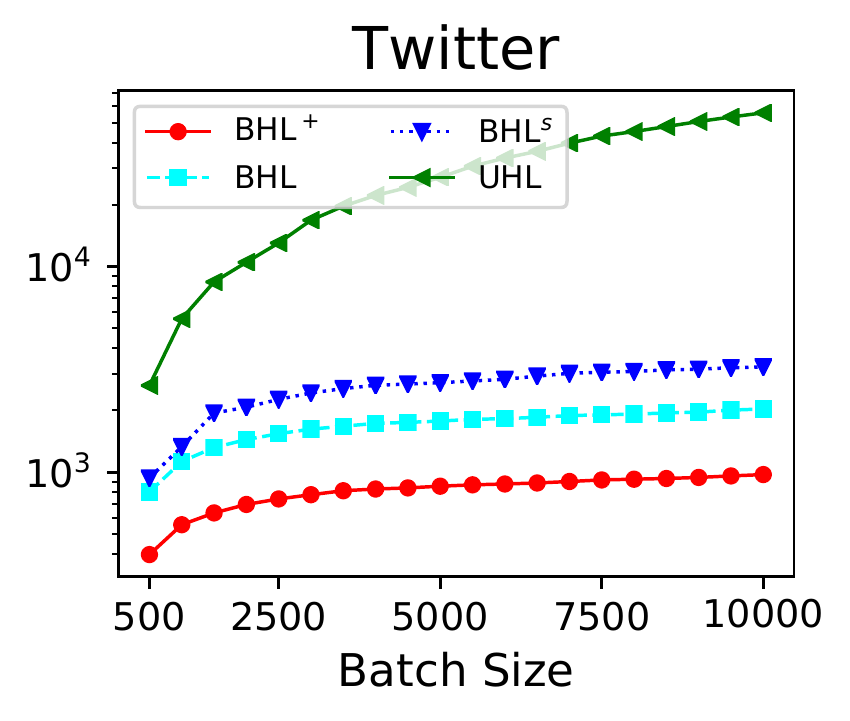}\vspace{-0.3cm}
\caption{\n{Number of vertices affected by batch updates of varying sizes. BHL and BHL$^+$ are our batch dynamic algorithms, BHL$^s$ is a variant of BHL which splits edge insertions and deletions into sub-batches and performs them sequentially, and UHL handles updates in the single-update setting.}}\label{fig:affected-vertices}\vspace*{-0.4cm}
\label{fig:intro}
\end{figure}

\section{RELATED WORK}\label{sec:background}
Answering distance queries has been an active research topic for many years. Traditionally, distance queries can be answered using Dijkstra's search on positively weighted graphs or a breadth-first search (BFS) on unweighted graphs \cite{tarjan1983data} or a bidirectional scheme combining two such searches: one from the source vertex and the other from the destination vertex \cite{pohl1971bi}. However, these algorithms may traverse an entire network when two query vertices are far apart from each other and become too slow for large networks. To accelerate response time in answering distance queries, labelling-based methods have emerged as an attractive way, which precompute a data structure, called \emph{distance labelling} \cite{cohen2003reachability,akiba2013fast,akiba2012shortest,fu2013label,jin2012highway,delling2014robust,abraham2012hierarchical,wei2010tedi,farhan2018highly,li2017experimental,chang2012exact,wang2021query}. For example, Akiba et al. \cite{akiba2013fast} proposed the pruned landmark labelling (PLL) to pre-compute a 2-hop distance labelling \cite{cohen2003reachability} by performing a pruned breadth-first search from every vertex, \n{and Li et al. \cite{li2019scaling} developed a parallel algorithm for constructing PLL which achieved the state-of-the-art results for answering distance queries on static graphs.}

So far, several attempts have been made to study distance queries over dynamic graphs \cite{akiba2014dynamic,qin2017efficient,hayashi2016fully,d2019fully,ouyang2020efficient,zhang2021efficient,farhan2021fast,farhan2021efficient} which only considered the unit-update setting i.e., to perform updates one at a time. Akiba et al. \cite{akiba2014dynamic} studied the problem of updating PLL for incremental updates (i.e. edge additions). This work however does not remove outdated entries because the authors considered it too costly. Qin et al. \cite{qin2017efficient} and D'angelo et al. \cite{d2019fully} studied the problem of updating PLL for decremental updates (i.e. edge deletions). Note that, in the decremental case, outdated distance entries have to be removed; otherwise distance queries cannot be correctly answered. Their methods suffer from high time complexities and cannot scale to large graphs, e.g., the average update time of an edge deletion on a network with 19M edges is 135 seconds in \cite{qin2017efficient} and on a network with 16M edges is 19 seconds in \cite{d2019fully}. D'angelo et al. \cite{d2019fully} combined the algorithm for incremental updates proposed in \cite{akiba2014dynamic} with their method for decremental updates to form a fully dynamic algorithm, which can only be applied to networks with around 20M of edges. Hayashi et al. \cite{hayashi2016fully} proposed a fully dynamic method which combines a distance labelling with online search to answer distance queries. Their method pre-computes bit-parallel shortest-path trees (SPTs) rooted at each $r \in R$ for a small subset of vertices $R$ and dynamically maintain the correctness of these bit-parallel SPTs for every edge insertion and deletion. Then, an online search is performed under an upper distance bound computed via the bit-parallel SPTs on a sparsified graph. \n{Unlike these approaches, our present work considers the batch-update setting.} Designing dynamic algorithms is usually quite complex and difficult, as reported by these approaches in the single-update setting, and arguably even more so in the batch-dynamic setting or parallel setting.

Another line of research studied streaming graph algorithms. In the streaming setting, a rapidly changing graph is often modeled using certain compressed data structures due to space constraints. Updates are received as a stream, but may be accumulated into batches through a sliding window and applied to the underlying graph. In this setting, a number of methods \cite{mcgregor2014graph,feigenbaum2005graph,pacaci2020regular} have been proposed to address distance queries. However, these methods operate under certain constraints, e.g., limited amount of memory and accuracy of graph structure. Different from these streaming graph methods, our work considers applications which operate on batch-dynamic graphs that are explicitly stored and can be processed in the main memory of a single machine. Nevertheless, the ideas of our algorithm can be easily extended to deal with batch updates in the streaming setting. \vspace{-0.1cm}
\section{Preliminaries}\label{sec:preliminaries}
Let $G = (V, E)$ be a graph where $V$ is a set of vertices and $E \subseteq V \times V$ is a set of edges. The \emph{distance} between two vertices $s$ and $t$ in $G$, denoted as $d_G(s, t)$, is the length of a shortest path between $s$ and $t$. If there does not exist any path between $s$ and $t$, then $d_G(s, t) = \infty$. We use $P_{G}(s,t)$ to denote the set of all shortest paths between $s$ and $t$ in $G$, and $N(v)$ the set of neighbors of a vertex $v \in V$, i.e. $N(v) = \{v' \in V | (v, v') \in E \}$. Without loss of generality, we focus our discussion on unweighted, undirected graphs in \n{this paper} and discuss the extension to directed and \n{non-negative} weighted graphs in Section \ref{sec:variants}. \looseness=-1

There are two fundamental types of updates on graphs: \emph{edge insertion}, i.e., add an edge $(a, b)$ into $E$, and \emph{edge deletion}, i.e., delete an edge $(a,b)$ from $E$. Note that node insertion or deletion can be treated as a batch update containing only edge insertions or only edge deletions, respectively. A \emph{batch update} is a sequence of edge insertions and deletions. In the case that the same edge is being inserted and deleted within one batch update, we simply eliminate both of them. An update is \emph{valid} if it makes a change on a graph, i.e., inserting an edge $(a,b)$ into $G$ when $(a,b)\notin E$, and deleting an edge $(a,b)$ from $G$ when $(a,b) \in E$. We ignore invalid updates. 

Let $R\subseteq V$ be a subset of special vertices in $G$, called \emph{landmarks}. A \emph{label} $L(v)$ for each vertex $v \in V$ is a set of \emph{distance entries} $\{(r_i, \delta_L(r_i, v))\}^n_{i=1}$ where $r_i \in R$, $\delta_L(r_i, v) = d_G(r_i, v)$ and $n\leq |R|$. We call $(r_i, \delta_L(r_i, v))$ the \emph{$r_i$-label} of vertex $v$. \n{The set of labels for all vertices in $V$, i.e., $\{L(v)\}_{v \in V}$, form a \emph{distance labelling} over $G$. The \emph{size} of a distance labelling is defined as $\sum_{v\in V}|L(v)|$.} In the literature, a distance labelling is often constructed following the 2-hop cover property \cite{cohen2003reachability} \n{which requires at least one vertex $w \in L(u) \cap L(v)$ to be on a shortest path between $u$ and $v$.}
\begin{definition}[2-hop cover labelling] 
A distance labelling $L$ over $G=(V,E)$ is a \emph{2-hop cover labeling} if for any $s, t\in V$,
\label{eq:twoHop_query_distance}
\begin{align}
d_G(s, t) = \texttt{min}\{\delta_L(r_i, s) + \delta_L(r_i, t) |\notag\hspace{2cm}\\ (r_i, \delta_L(r_i, s)) \in  L(s), (r_i, \delta_L(r_i,t)) \in  L(t)\}.
\end{align}
\end{definition}

In our work, we consider a labelling property based on the notion of highway, i.e., highway cover labelling \cite{farhan2018highly}.  

\begin{definition}[Highway]
A \emph{highway} $H=(R, \delta_H)$ consists of a set $R$ of landmarks and a distance decoding function $\delta_H : R \times R \rightarrow \mathbb{N}^+$ s.t. $\delta_H(r_1, r_2) = d_G(r_1, r_2)$ for any two landmarks $r_1, r_2\ \in R$.
\end{definition}

\begin{definition}[Highway cover labelling]\label{def:highway-cover}
A \emph{highway cover labelling} $\Gamma=(H, L)$ consists of a highway $H$ and a distance labelling $L$ satisfying that, for any $v \in V\backslash R$ and $r\in R$,
\begin{align}\label{equ:highway-cover}
d_G(r, v) = \texttt{min}\{\delta_L(r_i, v) + \delta_H(r, r_i) |\notag\hspace{0cm}\\
(r_i, \delta_L(r_i, v)) \in  L(v)\}
\end{align}

\end{definition}
\n{Intuitively, a highway cover labelling requires that the label $L(v)$ of every vertex $v\in V$ must contain a distance entry to each landmark $r\in R$ unless there is another landmark on a shortest path between $r$ and $v$. Unlike a 2-hop cover labelling that can answer distance queries for any two vertices in a graph, i.e., \emph{a full distance labelling}, a highway cover labelling can only answer distance queries between any landmark and any vertex in a graph, i.e., \emph{a partial distance labelling}.}
\begin{definition}[Minimality]
A highway cover labelling $\Gamma=(H, L)$ over $G$ is \emph{minimal} if, for any highway cover labelling $\Gamma'=(H, L')$ over $G$, $size(L')\geq size(L)$ holds.
\end{definition} 

It has been shown in \cite{farhan2018highly} that for any fixed set of landmarks, there exists a unique minimal highway cover labelling, which is contained in every highway cover labelling.
\section{Approach Overview}\label{sec:BDL}
\n{In this section, we present how to answer distance queries for any two vertices in a batch-dynamic graph by combining highway cover labelling with online searching.} 
The key idea is to dynamically maintain a highway cover labelling on a batch-dynamic graph, and then use such a highway cover labelling to bound online searches on a sparsified search space in order to accelerate query processing.

Given a highway cover labeling $\Gamma=(H, L)$, an upper bound on the distance between any pair of vertices $s, t \in V$ in a graph $G$ is computed as follows:
\begin{align}\label{eq:upper-distance-bound}
d^{\top}_{st} = \min \{&\delta_L(r_i, s) + \delta_H(r_i, r_j) + \delta_L(r_j, t) \mid \notag\\
& (r_i, \delta_L(r_i, s)) \in  L(s), (r_j, \delta_L(r_j, t)) \in  L(t)\} 
\end{align}

\n{Here, $d^{\top}_{st}$ is the minimal length amongst all paths between $s$ and $t$ that pass through the highway. Since there may exist a shorter path not passing through the highway, we conduct} a distance-bounded shortest-path search over a sparsified graph $G[V \backslash R]$ (i.e., removing all landmarks in $R$ from $G$) under the upper bound $d^{\top}_{st}$ \n{to answer the distance query $Q(s, t)$ such that }
\begin{align}
    Q(s,t)= \min (d_{G[V \backslash R]}(s,t),\; d^{\top}_{st} \notag)
\end{align}

\n{In the implementation, $d_{G[V \backslash R]}(s,t)$ can be computed by conducting a bidirectional BFS search from both $s$ and $t$ \cite{farhan2018highly} which terminates either after $d^{\top}_{st}-1$ steps or when the searches from both directions meet.}

The major challenge is: how to design an algorithm that can efficiently maintain a highway cover labelling for answering distance queries on graphs that undergo batch updates, particularly when graphs are very large?
\begin{algorithm}[t]
\caption{BatchHL (\textsc{BHL})}\label{algo:seq-algo}
\SetKwFunction{FBS}{BatchSearch}
\SetKwFunction{FBR}{BatchRepair}
\SetKwFunction{FMain}{BatchHL}
\SetKwProg{Fn}{Function}{}{end}
\Fn{\FMain{$G'$, $B$, $R$, $\Gamma$}}{
    $\Gamma'\gets\Gamma$ \\
    \ForEach{$r\in R$}{
        $\vaff\gets \FBS(G', B, r, \Gamma)$ \\
        $\FBR(G', \vaff, r, \Gamma, \Gamma')$ \\
    }
    \Return $\Gamma'$ \\
}
\end{algorithm}
\section{Proposed Method}
\label{sec:algorithm}
\n{In this section, we present our batch-dynamic method, BatchHL, which can efficiently maintain a minimal highway cover labelling for dynamic graphs. As described in Algorithm \ref{algo:seq-algo}, BatchHL involves two phases: \emph{Batch Search} and \emph{Batch Repair}.}

\subsection{Batch Search}
In the following let $G=(V,E)$ be a graph, $R\subseteq V$ a set of landmarks and $B$ a batch update resulting in the updated graph $G'=(V',E')$.
We denote the unique minimal highway labellings on $G$ and $G'$ by $\Gamma$ and $\Gamma'$, respectively. Our first aim is to identify vertices for which the set of shortest paths to a given landmark changes.

\begin{definition}[affected]\label{def:path-affected}
A vertex $v\in V$ is \emph{affected} by a batch update $B$ w.r.t. a landmark $r\in R$ iff $P_G(r,v)\neq P_{G'}(r,v)$.
\end{definition} 

We use $V_{\textsc{aff}}(r, B)=\{v\in V|P_{G}(v, r)\not=P_{G'}(v,r)\}$ to denote the set of all affected vertices by a batch update $B$ w.r.t. a landmark $r$. 
The following lemma states how affected vertices relate to a single update (either edge insertion or edge deletion).

\begin{lemma}\label{lemma:aff_vs}
A vertex $v$ is affected w.r.t. a landmark $r$ iff there exists a shortest path between $v$ and $r$ in $G\cup G'$ that passes through an inserted edge $(a, b)$ in $G'$ or a deleted edge $(a, b)$ in $G$.
\end{lemma}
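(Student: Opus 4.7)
The plan is to work directly from the formal characterisation ``$v$ is affected iff $P_G(r,v)\neq P_{G'}(r,v)$'' given by Definition~5.1, and to prove the two implications separately. The basic tool I will reuse throughout is the observation that a path consisting solely of edges in $E\cap E'$ is a path in both $G$ and $G'$ with the same length, so distances and shortest-path sets can only disagree on account of edges in $E\setminus E'$ or $E'\setminus E$.

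For the easier $(\Leftarrow)$ direction I would take the hypothesised witness path $\pi$ and split on whether the updated edge it traverses is deleted or inserted. If $(a,b)\in E\setminus E'$ and $\pi$ is a shortest path through $(a,b)$ (which necessarily lies in $G$), then $\pi\in P_G(r,v)$; since $(a,b)\notin E'$, the path $\pi$ fails even to be a path in $G'$, hence $\pi\notin P_{G'}(r,v)$, and so the two sets differ. The inserted-edge case is completely symmetric and produces a witness in $P_{G'}(r,v)\setminus P_G(r,v)$. In both cases the witness is a path living in $G\cup G'$ that is shortest in the relevant graph, matching the statement.

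For the $(\Rightarrow)$ direction I would argue by contrapositive. Suppose no path of $P_G(r,v)$ uses a deleted edge and no path of $P_{G'}(r,v)$ uses an inserted edge; I will show this forces $P_G(r,v)=P_{G'}(r,v)$. Each $\pi\in P_G(r,v)$ then uses only edges in $E\cap E'$, so it is a path in $G'$ of length $d_G(r,v)$, giving $d_{G'}(r,v)\le d_G(r,v)$. The symmetric argument yields the reverse inequality, so $d_G(r,v)=d_{G'}(r,v)$. Consequently every $\pi\in P_G(r,v)$ is a shortest path in $G'$ and so lies in $P_{G'}(r,v)$, and by symmetry the reverse inclusion holds, contradicting the assumption that $v$ is affected.

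The genuinely delicate step is the forward direction. A naive attack that tries to pick a single witness in $P_G(r,v)\triangle P_{G'}(r,v)$ and trace an updated edge on it has to wrestle with the fact that distances themselves may have changed and that the witness might fail to exist in the other graph for an edge unrelated to the one we want to blame. Going through the contrapositive sidesteps this by forcing both sides of the ``no-updated-edge'' assumption to collapse the distances and the path sets simultaneously; this also matches the paper's stated goal of handling insertions and deletions in a unified way.
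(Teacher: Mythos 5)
The paper states this lemma \emph{without proof} (it is followed only by a one-sentence informal gloss), so there is no authorial argument to compare yours against; your proof fills a real gap and its overall structure --- the easy direction by exhibiting a witness in $P_G(r,v)\setminus P_{G'}(r,v)$ or its mirror, the hard direction by contrapositive forcing $d_G(r,v)=d_{G'}(r,v)$ and then $P_G(r,v)=P_{G'}(r,v)$ --- is sound. Two points should be made explicit. First, your argument silently resolves an ambiguity in the statement: you read the witness as a path that is shortest \emph{in $G$} when it traverses a deleted edge and shortest \emph{in $G'$} when it traverses an inserted edge, i.e., a path lying wholly inside one of the two graphs. This reading is not optional. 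Under the literal reading ``shortest in $G\cup G'$'', the ``if'' direction is false: a path mixing a deleted and an inserted edge can attain the minimum length in $G\cup G'$ while being a path in neither $G$ nor $G'$, leaving $P_G(r,v)=P_{G'}(r,v)$ untouched --- precisely the phenomenon the paper later isolates as a \emph{significant composite path} (Definition~\ref{def:composite-path}, Example~\ref{E:detect-path-affected}) and the reason Algorithm~\ref{algo:affected} only overapproximates the affected set. So the parenthetical ``(which necessarily lies in $G$)'' should be promoted to a stated disambiguation of the lemma. Second, the contrapositive step extracts $d_{G'}(r,v)\le d_G(r,v)$ from the paths in $P_G(r,v)$, which fails vacuously when $P_G(r,v)=\emptyset$; one line observing that an empty shortest-path set makes the corresponding distance $\infty$, and that under your hypothesis the other shortest-path set must then also be empty (a path in $G'$ avoiding inserted edges is a path in $G$), closes the disconnected case.
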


An edge insertion or deletion $(a,b)$ can create or eliminate shortest paths starting from $r$ and passing through $(a,b)$. By this lemma, we know that any update on an edge $(a,b)$ with $d_{G}(r,a) = d_{G}(r,b)$ is \emph{trivial} w.r.t. a landmark $r$, since such an update does not affect any vertices w.r.t. the landmark $r$. 
\begin{algorithm}[t]
\caption{Batch Search}\label{algo:affected}
\SetCommentSty{textit}
\SetKwFunction{FMain}{BatchSearch}
\SetKwProg{Fn}{Function}{}{end}
\Fn{\FMain{$G'$, $B$, $r$, $\Gamma$}}{

    \ForEach{$(a,b)\in B$}{
        \If{$d_G(r,a) < d_G(r,b)$}{
            add $(d_G(r,a) + 1, b)$ to $\mathcal{Q}$ \\
        }
        \ElseIf{$d_G(r,a) > d_G(r,b)$}{
            add $(d_G(r,b) + 1, a)$ to $\mathcal{Q}$ \\
        }
    }
    \While{$\mathcal{Q}$ is not empty}{
        remove minimal $(d,v)$ from $\mathcal{Q}$ \\
        \If{$v\notin\vcpaff$}{
            add $v$ to $\vcpaff$ \\
            \ForEach{$w \in N_{G'}(v)$}{
                \If{$d+1\leq d_G(r,w)$}{\label{ln:affected-prune}
                    add $(d+1,w)$ to $\mathcal{Q}$ \\
                }
            }
        }
    }
    \KwSty{return} $\vcpaff$
}
\end{algorithm}
\begin{figure}[t!]

\centering
\begin{tikzpicture}
\foreach \pos/\name in {%
{(0,2)/r}, {(1,2)/a},
{(0,1)/b}, {(1,1)/c}, {(2,1)/d},
{(0,0)/e}, {(1,0)/f}, {(2,0)/g}}
    \node[vertex] (\name) at \pos {\name};
\foreach \source/\dest in {%
r/a, b/c, c/d, e/f, f/g}
    \path[edge] (\source) -- (\dest);
\foreach \name in {r}
    \draw (\name) circle (6pt);
\path[edge,dashed] (a) edge[left] node{+} (b);
\path[edge,dashed] (d) edge[below right] node[right=5pt]{+} (e);
\path[edge,dashed] (b) edge[right] node{-} (e);
\path[edge,dashed] (a) edge[right] node{-} (c);
\tikzcaption{(a)};
\end{tikzpicture}
\hspace*{0.2cm}
\begin{tikzpicture}
\foreach \pos/\name in {%
{(0,2)/r}, {(1,2)/a},
{(0,1)/b}, {(1,1)/c}, {(2,1)/d},
{(0,0)/e}, {(1,0)/f}, {(2,0)/g}}
    \node[vertex] (\name) at \pos {\name};
\foreach \source/\dest in {%
r/a, a/c, b/c, c/d, b/e, e/f, f/g}
    \path[edge] (\source) -- (\dest);
\foreach \name in {r}
    \draw (\name) circle (6pt);
\path[edge,dashed] (a) edge[left] node{+} (b);
\path[edge,dashed] (d) edge[below right] node[right=5pt]{+} (e);
\tikzcaption{(b)};
\end{tikzpicture}
\hspace*{0.2cm}
\begin{tikzpicture}
\foreach \pos/\name in {%
{(0,2)/r}, {(1,2)/a},
{(0,1)/b}, {(1,1)/c}, {(2,1)/d},
{(0,0)/e}, {(1,0)/f}, {(2,0)/g}}
    \node[vertex] (\name) at \pos {\name};
\foreach \source/\dest in {%
r/a, a/b, b/c, c/d, d/e, e/f, f/g}
    \path[edge] (\source) -- (\dest);
\foreach \name in {r}
    \draw (\name) circle (6pt);
\path[edge,dashed] (b) edge[right] node{-} (e);
\path[edge,dashed] (a) edge[right] node{-} (c);
\tikzcaption{(c)};
\end{tikzpicture}
\begin{align*}
\setlength{\arraycolsep}{1pt}
\begin{array}{c||rcc|c|c|c|c|c|c}
 \multirow{2}{*}{}&v&=&a & b & c & d & e & f & g  \\
\cline{2-10}
&d_G(r,v)&= &1 & 3 & 2 & 3 & 4 & 5 & 6 \\\hline
Anchor& d_{G'}(b,v)&=&1 & 0  & 1 & 2 & 3 & 4 & 5\\\cline{4-10}
b &Eq.~\ref{equ:pattern} &=&False &True&False&False&False&False&False\\\hline
Anchor &d_{G'}(c,v)&= & 2&1&0&1&2&3&4\\\cline{4-10}
c&Eq.~\ref{equ:pattern} &= &False&True&True&True&True&True&True\\\hline
Anchor &d_{G'}(e,v)&= & 4&3&2&1&0&1&2\\\cline{4-10}
e & Eq.~\ref{equ:pattern} &=&False&False&False&False&True&True&True\\\hline
\end{array}
\end{align*}\vspace{-0.3cm}
\caption{\n{Example graphs, where edges marked by $+$ are inserted and edges marked by $-$ are deleted.}}\label{fig:bs_examples}\vspace{-0.4cm}
\end{figure}

\n{A naive way of finding affected vertices would be to apply Definition~\ref{def:path-affected} directly, by computing the set of all shortest paths from a landmark to each vertex on $G$ and $G'$, respectively,} and comparing them.
However, the computational cost of this would be prohibitive, even for small graphs.
\n{Until now, the standard way of handling graph changes is to treat edge insertion and edge deletion separately, since they have opposite effects on a graph. A natural extension on batch updates would then be to devise an incremental algorithm for batch edge insertions and a decremental algorithm for batch edge deletions. However, for a batch update that contains both edge insertions and edge deletions, 
we would then need to split it into two sub-batches - one for edge insertions and the other for edge deletions, and apply incremental and decremental algorithms, respectively.
Thus, repeated computations across edge insertions and deletions cannot be eliminated because no interaction between edge insertion and deletion can be captured.}

\begin{example}\label{E:batch-search}
\n{Consider Figure~\ref{fig:bs_examples}.a with four updates. If handling edge insertions and deletions separately in two sub-batches as shown in Figure~\ref{fig:bs_examples}.b-\ref{fig:bs_examples}.c, insertions of $(a,b)$ and $(d,e)$ lead to affected vertices $\{b,e,f,g\}$, while deletions of $(a,c)$ and $(b,e)$ lead to affected vertices $\{c,d,e,f,g\}$. The traversal on edges $(e,f)$ and $(f,g)$ is repeated.}
\end{example}

\n{To overcome the aforementioned shortcomings, we propose an efficient algorithm that unifies edge insertions and deletions. The key idea is based on our observation of a ``\emph{shared pattern}'' that characterises affected vertices w.r.t. a landmark in a unified way for both edge insertions and edge deletions.} 

\n{Let $r\in R$ and $(a,b)\in B$. Here, $(a,b)$ is any update, i.e., either inserted or deleted edge. The \emph{anchor} of $(a,b)$ is either $a$ or $b$, whichever is further away from $r$, and the \emph{pre-anchor} of $(a,b)$ is a vertex in $\{a,b\}$ that is not the anchor. The \emph{anchor distance} of $(a,b)$ is defined as $d_G(r, u')+1$ where $u'$ is the pre-anchor of $(a,b)$. Note that when $d_G(r, a)=d_G(r, b)$, there is no anchor nor pre-anchor corresponding to the update $(a,b)$. For each $B$, there exists a set of anchors corresponding to updates in $B$. An affected vertex $v$ in $G$ w.r.t. $r$ by a batch update $B$ can be found if the following condition is satisfied by at least one anchor $u$ from $B$:
}
\begin{equation}\label{equ:pattern}
 \n{d_G(r,v)\geq (d_G(r, u')+1)+d_{G'}(u, v).}  
\end{equation}

\begin{example}\label{E:batch-search2}
\n{Consider Figure~\ref{fig:bs_examples}.a again, which has three anchors $b$, $c$ and $e$ corresponding to the four updates. By applying Eq.~\ref{equ:pattern}, we can identify affected vertices $\{b,c,d,e,f,g\}$ as shown in the table.}\end{example}

\n{This striking pattern enables us to design a simple yet efficient algorithm for finding affected vertices which only needs to traverse local neighbors $v$ of each anchor $u$ on $G'$ recursively, i.e., computing $d_{G'}(u, v)$, regardless whether updates are edge insertions or deletions. The anchor distance $d_G(r, u')+1$ and the distance $d_G(r,v)$ on $G$ can be efficiently computed from the highway cover labelling $\Gamma$. The searches by different updates can be combined into a single search in the order of the anchor distances plus their distances to the anchors to avoid unnecessary computation.}  

\n{We note that due to this unified handling of insertions and deletions, optimization that apply to only one of these operations cannot simply be applied to the combined algorithm.
However, we show how one such optimization can still be leveraged in Section~\ref{sec:batch-search-improvements}.}

Armed with these ideas, Algorithm~\ref{algo:affected} eliminates unnecessary searches on unaffected vertices $v$ with $d_G(r,v)<d_G(r,u')+1$ and also avoids traversing vertices affected by multiple updates more than once.
However, Algorithm~\ref{algo:affected} does not precisely compute the set of all affected vertices, but a superset of it. The following example illustrates why this happens, and why it is difficult to avoid.

\begin{example}\label{E:detect-path-affected}
\n{Consider the graph in Figure~\ref{fig:more_examples}.a}.
The dotted edge between $r$ and $u$ indicates a long path between them, and the dotted edge between $r$ and $v$ indicates an even longer path.
When both edge deletion $(r,u)$ and edge insertion $(u,v)$ occur, the distance between $r$ and $u$ in $G$ is used to compute the anchor distance of $v$ for the update $(u,v)$, ignoring that the distance between $r$ and $u$ has changed. It is difficult to identify whether $v$ is affected -- it hinges on whether the long path between $r$ and $v$ is longer than the long path between $r$ and $u$ plus 1, which cannot be ascertained by $\Gamma$.
\end{example}

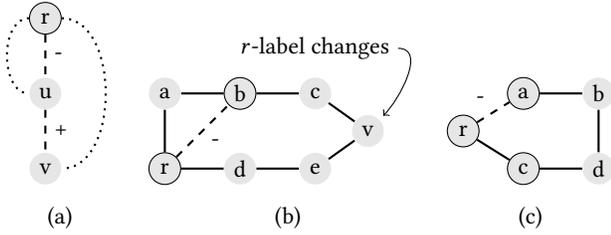
\begin{figure}[h]
\begin{tikzpicture}
\foreach \pos/\name in {%
{(0,2)/r},
{(0,1)/u},
{(0,0)/v}}
    \node[vertex] (\name) at \pos {\name};
\path[edge,dotted] (r) edge [out=-180,in=-180,left] (u);
\path[edge,dotted] (r) to [out=0,in=0] (v);
\foreach \name in {r}
    \draw (\name) circle (6pt);
\path[edge,dashed] (r) edge[right] node{-} (u);
\path[edge,dashed] (u) edge[right] node{+} (v);
\tikzcaption{(a)};
\end{tikzpicture}
\hspace{0.2cm}
\begin{tikzpicture}
\foreach \pos/\name in {%
{(0,1)/a}, {(1,1)/b}, {(2,1)/c},
{(2.7,0.5)/v},
{(0,0)/r}, {(1,0)/d}, {(2,0)/e}}
    \node[vertex] (\name) at \pos {\name};
\foreach \source/\dest in {%
r/a, r/d, a/b, b/c, c/v, d/e, e/v}
    \path[edge] (\source) -- (\dest);
\foreach \name in {r,b}
    \draw (\name) circle (6pt);
\path[edge,dashed] (r) edge[below right] node{-} (b);
\node (affected) at (2,1.6) {$r$-label changes};
\path[draw,->,shorten >= 2pt] (affected) to [out=0,in=45] (v);
\tikzcaption{(b)};
\end{tikzpicture}
\hspace{0.1cm}
\begin{tikzpicture}
\foreach \pos/\name in {%
{(0,1)/a}, {(1,1)/b},
{(-0.8,0.5)/r},
{(0,0)/c}, {(1,0)/d}}
    \node[vertex] (\name) at \pos {\name};
\foreach \source/\dest in {%
r/c, a/b, b/d, c/d}
    \path[edge] (\source) -- (\dest);
\foreach \name in {r,a,c}
    \draw (\name) circle (6pt);
\path[edge,dashed] (r) edge[above left] node{-} (a);
\tikzcaption{(c)};
\end{tikzpicture}\vspace{-0.3cm}\caption{\n{Example graphs for illustrating batch search.}}\label{fig:more_examples}\vspace{-0.4cm}
\end{figure}

We now characterize the set of vertices returned by Algorithm~\ref{algo:affected}.

\begin{definition}[composite path]\label{def:composite-path}
A path from $r$ to $v$ in $G\cup G'$ is a \emph{composite path} iff it consists of two parts: a part that lies in $G$ followed by a part in $G'$.
\end{definition}

A composite path is \emph{significant} iff it passes through at least one deleted and at least one inserted edge.
\n{In Figure~\ref{fig:bs_examples}.a, $r-a-b-c$ and $r-a-c-d$ are insignificant composite paths, $r-a-c-d-e$ is a significant composite path, and $r-a-b-e$ is not a composite path as a deleted edge comes after an inserted edge.}

\begin{definition}[composite-path affected]\label{def:composite-path-affected}
A vertex $v\in V$ is \emph{composite-path-affected} \n{\emph{(CP-affected)} by a batch update $B$ w.r.t. a landmark} $r\in R$ iff 
\begin{enumerate}
\item[(i)] $v$ is affected w.r.t. $r$, or
\item[(ii)] there exists a significant composite path from $r$ to $v$ of length $d_G(r,v)$ or less.
\end{enumerate}
\end{definition}

We will show that Algorithm~\ref{algo:affected} returns the set of all composite-path-affected vertices.
Clearly this includes all affected vertices.
Additional vertices due to condition (ii) are undesirable but hard to avoid, as illustrated in Example~\ref{E:detect-path-affected}.
From an algorithmic perspective, it happens because our starting distance is calculated w.r.t. $G$, so we are effectively considering paths for which the first part (from $r$ to an anchor) lies in $G$, and the rest in $G'$.

\begin{lemma}\label{L:batch-search-cpa}
Algorithm~\ref{algo:affected} returns the set of all \n{CP-affected} vertices.
\end{lemma}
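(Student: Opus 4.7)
The statement claims that $\vcpaff$ returned by Algorithm~\ref{algo:affected} is exactly the set of CP-affected vertices (Definition~\ref{def:composite-path-affected}) with respect to $r$. The plan is to prove the two inclusions separately, extracting a composite path from the algorithm's BFS trajectory in one direction, and constructing a trajectory from a well-chosen composite witness in the other.

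For soundness ($\vcpaff\subseteq{}$CP-affected), I will trace the algorithm's history of each added $v$. By induction on the BFS expansion, whenever $v$ is added with cumulative distance $d$, there is an update $(u'_0,u_0)\in B$ and a path $u_0,w_1,\dots,w_k=v$ in $G'$ with $d=d_G(r,u'_0)+1+k\leq d_G(r,v)$, the last inequality coming from the pruning at line~\ref{ln:affected-prune}. Gluing any shortest $r$-$u'_0$ path in $G$ to this trajectory produces a composite path $P$ from $r$ to $v$ of length at most $d_G(r,v)$ that uses at least the update $(u'_0,u_0)$. If $P$ is significant, condition~(ii) of Definition~\ref{def:composite-path-affected} holds. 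Otherwise $P$ uses updates of only one kind: if only deletions, the second-part edges carry no insertions, so they lie in $G\cap G'$ and $P$ lies entirely in $G$, forcing $|P|=d_G(r,v)$ and making $P$ a shortest $G$-path destroyed in $G'$; if only insertions, the first-part edges carry no deletions, so $P$ lies entirely in $G'$, forcing either $d_{G'}(r,v)<d_G(r,v)$ or a new shortest $G'$-path absent from $G$. In either sub-case $v$ is affected, so condition~(i) holds.

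For completeness (CP-affected${}\subseteq\vcpaff$), given CP-affected $v$, I will first produce a composite path from $r$ to $v$ of length at most $d_G(r,v)$ using at least one update: condition~(ii) supplies one directly; condition~(i) supplies one after a short case split on whether $d_{G'}(r,v)$ is less than, equal to, or greater than $d_G(r,v)$, by taking a shortest path in the graph that witnesses the change and placing it entirely in the appropriate part of the composite. I then choose $P=(r=p_0,\dots,p_m=v)$ of minimum length among all such witnesses and push its split as far right as possible (absorbing edges that lie in $G$ into the first part). The central technical step is the prefix inequality $d_G(r,p_j)\geq j$ for every $0\leq j\leq m$: if $d_G(r,p_j)<j$, I would replace the prefix of $P$ up to $p_j$ with a shortest $G$-path and obtain a strictly shorter composite path that must still pass through an update (otherwise the full path would lie in $G$ and contradict the definition of $d_G(r,v)$), violating minimality.

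With the prefix bound in hand, I conclude by a two-case split. If the split sits at $p_m$, then $P$ lies entirely in $G$ and is a shortest $G$-path carrying at least one deletion; taking the last such deletion $(p_{j-1},p_j)$ makes $p_j$ an anchor at anchor distance $j$, the remainder $p_j,\dots,v$ consists of $G\cap G'$ edges, and the algorithm walks along $P$ with cumulative distances matching $d_G(r,\cdot)$ exactly, never pruned. Otherwise the rightward shift forces the first second-part edge $(p_k,p_{k+1})$ to be an insertion; the prefix bound gives $d_G(r,p_k)=k<d_G(r,p_{k+1})$, so $p_{k+1}$ is the anchor at distance $k+1$, and the cumulative distances $k+j$ along $p_{k+1},\dots,p_m$ stay within $d_G(r,p_{k+j})$ by the prefix inequality. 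The main obstacle is this completeness direction: a naive attempt to follow an arbitrary CP-witness will fail because the pruning at line~\ref{ln:affected-prune} can cut off paths at intermediate vertices whose $d_G$-distance is smaller than the cumulative search distance, even when the overall path length fits within $d_G(r,v)$; choosing a minimum-length witness together with the prefix inequality is what both disables pruning along the chosen trajectory and selects the correct update endpoint (first insertion or last deletion) as the algorithm's seed anchor, unifying the insertion and deletion cases.
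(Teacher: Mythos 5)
Your proof is correct, and its skeleton mirrors the paper's: both inclusions, reduction to composite paths through updated edges, and seeding the search at the last deleted or first inserted edge. The soundness direction is essentially the paper's ``if'' direction, with the induction on the BFS expansion made explicit. Where you genuinely depart is in the completeness direction. The paper takes a specific witness (a shortest path in $G$ or in $G'$, or a significant composite path), splits it at the last deleted or first inserted edge $(a,b)$, and simply asserts that $|p_{ra}|=d_G(r,a)$, that $d_G(r,a)<d_G(r,b)$, and that every intermediate vertex $w$ on $p_{bv}$ satisfies $|p_{rw}|\leq d_G(r,w)$ so the pruning in Line~\ref{ln:affected-prune} never fires; none of these is argued, and for an arbitrary witness in a mixed batch the last one can actually fail, as you correctly observe. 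Your extremal argument --- taking a minimum-length composite witness, pushing the split as far right as possible, and deriving the prefix inequality $d_G(r,p_j)\geq j$ by a shortcut-and-contradict step --- supplies exactly the missing justification: it pins down $d_G(r,p_k)=k$ at the split (combining the prefix bound with the fact that the first part is a $G$-path), identifies the correct anchor and anchor distance, and keeps the cumulative search distance below $d_G(r,\cdot)$ along the entire followed suffix. This buys a proof that is robust to insertion/deletion interaction at the cost of a longer argument. Two loose ends are routine and do not affect correctness: the spliced path in the prefix-inequality step may be a walk rather than a simple path (the paper's own proof has the same issue), and a vertex on the followed suffix may already lie in $\vcpaff$ with a smaller key, in which case the usual Dijkstra-style monotonicity argument shows its neighbours are still enqueued with keys no larger than the ones you compute.
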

\begin{proof}
We show that a vertex is \n{CP-affected} iff it lies in $\vcpaff$ returned by Algorithm~\ref{algo:affected}. We prove the “if” and “only if” below.

(if) Let $v\in\vcpaff$.
Then there must exist a composite path $p$ from $r$ to $v$ of length at most $d_G(r,v)$ that passes through at least one edge in $B$. If $p$ lies in $G$ then it lies in $P_G(r,v)$ but not in $P_{G'}(r,v)$, so $v$ is affected.
If $p$ lies in $G'$, then either it lies in $P_{G'}(r,v)$ or there exists an strictly shorter path $p'$ in $P_{G'}(r,v)$.
Neither $p$ nor $p'$ lies in $P_G(r,v)$, so $v$ is affected.
If $p$ lies neither in $G$ nor in $G'$ then it must be significant.
Thus $v$ is \n{CP-affected}. in all cases.

(only if) Reversely, let $v$ be CP-affected.
If $P_G(r,v)\not\subseteq P_{G'}(r,v)$ then there exists a path $p$ in $G$ of length $d_G(r,v)$ that passes through a deleted edge.
If $P_{G}(r,v)\subsetneq P_{G'}(r,v)$ then there exists a path $p$ in $G'$ of length at most $d_G(r,v)$ that passes through an inserted edge.
Otherwise the exists a significant composite path of length at most $d_G(r,v)$.
Thus, in all cases, there exists a composite path $p$ of length at most $d_G(r,v)$ that passes through an edge in $B$.

Let $(a,b)$ be either the last deleted edge that $p$ passes through, or the first inserted edge, with $d_G(r,a) < d_G(r,b)$.
Then $p$ can be split into $p_{ra}$ from $r$ to $a$, $(a,b)$ and $p_{bv}$ from $b$ to $v$ such that $p_{ra}$ lies in $G$ and $p_{bv}$ in $G'$.
The search in Algorithm~\ref{algo:affected} starting at $b$ will use $|p_{rb}| = d_G(r,a) + 1$ as the anchor distance for $b$, and proceed along $p_{bv}$.
Thus for every vertex $w\in p_{bv}$, including $v$, it will obtain $|p_{rw}| \leq d_G(r,w)$ and add $w$ to $\vcpaff$.
\end{proof}

\subsection{Improved Batch Search}\label{sec:batch-search-improvements}
So far we aimed at computing affected vertices.
However, changes to shortest paths between $r$ and $v$ do not always cause a change in distance. Thus we shall differentiate between new and eliminated paths, and strengthen the pruning condition $d+1\leq d_G(r,w)$ in Line~\ref{ln:affected-prune} of Algorithm~\ref{algo:affected} to $d+1 < d_G(r,w)$ for new paths.

Things get a little trickier though, as we may need to eliminate redundant labels, or restore previously eliminated labels when they become non-redundant.
Thus even if the distance between $r$ and $v$ does not change, the highway labeling may need to be updated.

\begin{example}\label{E:label-affected}
\newcommand{\vertices}{
\foreach \pos/\name in {%
{(0.8,2)/r},
{(0,1)/a}, {(1.6,1)/b},
{(0.8,0)/v}}
    \node[vertex] (\name) at \pos {\name};
}
Consider the following graphs and updates, where the landmarks are circled. In all cases, vertex $v$ is affected, but the distance between $r$ and $v$ does not change.
For case (a) adding the edge $(b,v)$ does not cause a label change for $v$.
It does however for case (b) where b is a landmark, causing the $r$-label of $v$ to be deleted.
Deletion of $(b,v)$ does not cause a change on the label of $v$ in case (c), but causes a change in case (d) where an $r$-label needs to be inserted.
\begin{center}
\begin{tikzpicture}[scale=0.8]
\vertices
\foreach \source/\dest in {%
r/a, r/b, a/v}
    \path[edge] (\source) -- (\dest);
\foreach \name in {r}
    \draw (\name) circle (8pt);
\path[edge,dashed] (b) edge[below right] node{+} (v);
\tikzcaption{(a) no change};
\end{tikzpicture}
\hfill
\begin{tikzpicture}[scale=0.8]
\vertices
\foreach \source/\dest in {%
r/a, r/b, a/v}
    \path[edge] (\source) -- (\dest);
\foreach \name in {r,b}
    \draw (\name) circle (8pt);
\path[edge,dashed] (b) edge[below right] node{+} (v);
\tikzcaption{(b) change};
\end{tikzpicture}
\hfill
\begin{tikzpicture}[scale=0.8]
\vertices
\foreach \source/\dest in {%
r/a, r/b, a/v}
    \path[edge] (\source) -- (\dest);
\foreach \name in {r,a}
    \draw (\name) circle (8pt);
\path[edge,dashed] (b) edge[below right] node{-} (v);
\tikzcaption{(c) no change};
\end{tikzpicture}
\hfill
\begin{tikzpicture}[scale=0.8]
\vertices
\foreach \source/\dest in {%
r/a, r/b, a/v}
    \path[edge] (\source) -- (\dest);
\foreach \name in {r,b}
    \draw (\name) circle (8pt);
\path[edge,dashed] (b) edge[below right] node{-} (v);
\tikzcaption{(d) change};
\end{tikzpicture}
\end{center}

\end{example}

A core difficulty in identifying whether affected vertices have changes on their labels is that label changes can happen far away from updates, and computing the changed labels of such vertices may require the consideration of vertices whose labels do not change, as illustrated by the example below.

\begin{example}\label{E:label-affected-isolated}
\n{Consider the graph in Figure~\ref{fig:more_examples}.b}, where $r$ and $b$ are landmarks and the edge $(r,b)$ is deleted.
The distance between $r$ and $c$ changes, but the label of $c$ does not change.
That is because the shortest path between $r$ and $c$ goes through landmark $b$ without change.
At the same time the label of $v$ does change, as the edge $(r,b)$ eliminates a shortest path between $r$ and $v$ that passes through landmark $b$, similar to case (d) in Example \ref{E:label-affected}. Although the label of $c$ does not change, the changed distance between $r$ and $c$ is needed for computing the changed label of $v$. Therefore, $c$ needs to be captured as well. 
\end{example}

We thus need to reexamine exactly which vertices need to be returned.
Firstly, this includes any vertex $v$ for which the highway labeling must be updated.
For non-landmarks the only possible change is to their $r$-label.
For landmarks their distance to $r$ is stored as part of the highway, and needs to be updated when it changes. Secondly, we must return any vertex for which the distance to $r$ changes.
That is because the batch repair algorithm computes the updated distance of a vertex to $r$ from that of its neighbors, so using outdated values for a neighbor could lead to errors.

\begin{example}\label{E:neighbor-distance}
\n{Consider the graph in Figure~\ref{fig:more_examples}.c}, where $r$, $a$ and $c$ are landmarks and the edge $(r,a)$ is getting deleted. The only node for which the highway labeling needs to be updated is $a$.
For $b$ the distance to $r$ changes, but its $r$-label is still redundant.
Using the old distance between $r$ and $b$ would cause our batch repair algorithm to compute $d_{G'}(r,a)$ as $d_{G}(r,b)+1=3$.
\end{example}

By considering vertices for which either label or distance changes, we can address both of the issues illustrated in Examples~\ref{E:label-affected-isolated} and~\ref{E:neighbor-distance}.
This motivates the following definition.

\begin{definition}[landmark-distance affected]\label{def:ld-affected}
A vertex $v$ is \n{\emph{landmark-distance-affected}} \emph{(LD-affected)} by a batch update $B$ w.r.t. a landmark $r\in R$ iff it is
\begin{enumerate}
\item[(i)]\emph{label-affected:} the $r$-label of $v$ changes, or
\item[(ii)]\emph{distance-affected:} the distance between $r$ and $v$ changes.
\end{enumerate}
\end{definition}

As seen in Example~\ref{E:label-affected}, changes to $r$-label without changes to distance happen whenever a new shortest path passing through another landmark is created where none existed previously, or when the last such path is deleted.
To identify such cases, we track whether a shortest path to $r$ passes through another landmark.

\begin{definition}[landmark length]
The \emph{landmark length} of a path $p$ starting from $r\in R$ is a tuple $(d,\lmf)\in\mathbb{N}\times\mathbb{B}$ where
\begin{itemize}
    \item $d$ is the length of $p$ (number of edges), and
    \item $\lmf$ is the \emph{landmark flag}, with $\lmf=\true$ iff $p$ passes through a landmark other than $r$.
\end{itemize}
We denoted this landmark length as $\lml{p}$.
The \emph{landmark distance} between $r$ and $v$ in $G$ is the minimal landmark length of paths between them, denoted as
\[
\dlm(r,v):=\min\big\{\; \lml{p} \mid \text{ $p$ is a path between $r$ and $v$ in $G$} \;\big\}
\]
The ordering used to compare landmark length tuples is the lexicographical one, with $\true < \false$.
The latter ensures that the landmark flag of $\dlm(r,v)$ is set iff \emph{any} of the shortest paths between $r$ and $v$ passes through another landmark.
\end{definition}

\begin{lemma}\label{L:r-label}
Let $\dlmp(r,v)=(d,\lmf)$.
If $d=\infty$ or $\lmf=\true$ then $v$ has no $r$-label in $\Gamma'$.
Otherwise $v$ has the $r$-label $(r,d)$.
\end{lemma}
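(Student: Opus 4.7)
The plan is to prove the two directions separately, leveraging the uniqueness of the minimal highway cover labelling from \cite{farhan2018highly}. One direction establishes that no $r$-label is present when $d=\infty$ or $\lmf=\true$; the other establishes that the $r$-label equals $(r,d)$ when $d<\infty$ and $\lmf=\false$.

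For the first direction with $d=\infty$: the vertices $r$ and $v$ lie in different components of $G'$, so $d_{G'}(r,v)=\infty$ and Equation~\ref{equ:highway-cover} at the pair $(r,v)$ is vacuously satisfied (the min over an empty set of contributions being $\infty$). By minimality, $L'(v)$ contains no $r$-label. For the sub-case $\lmf=\true$ with $d<\infty$, I would pick a landmark $r'\neq r$ lying on some shortest $r$–$v$ path in $G'$, chosen to be the landmark nearest to $v$ on that path. The subpath from $r'$ to $v$ is itself a shortest path containing no other landmark, so $\dlmp(r',v)=(d_{G'}(r',v),\false)$ with $d_{G'}(r',v)<d_{G'}(r,v)$. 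By induction on $d_{G'}(r,v)$, the label $(r',d_{G'}(r',v))$ lies in $L'(v)$, and since $\delta_H(r,r')=d_{G'}(r,r')$ we obtain $\delta_H(r,r')+\delta_L(r',v)=d_{G'}(r,r')+d_{G'}(r',v)=d$. Hence Equation~\ref{equ:highway-cover} at $(r,v)$ is met via $r'$ alone, so by minimality no $r$-label of $v$ is required.

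For the second direction, suppose $d<\infty$ and $\lmf=\false$, and assume toward contradiction that $(r,d)\notin L'(v)$. Then, since Equation~\ref{equ:highway-cover} must still be satisfied at $(r,v)$, some landmark $r'\neq r$ with $(r',\delta_L(r',v))\in L'(v)$ attains $\delta_H(r,r')+\delta_L(r',v)=d$. Using $\delta_H(r,r')=d_{G'}(r,r')$ and $\delta_L(r',v)=d_{G'}(r',v)$, this yields $d_{G'}(r,r')+d_{G'}(r',v)=d_{G'}(r,v)$, so $r'$ lies on some shortest $r$–$v$ path in $G'$, contradicting $\lmf=\false$. Therefore $(r,d)$ is present in $L'(v)$, with value $d=d_{G'}(r,v)$, which also gives the value stored.

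The main obstacle is the induction step in the sub-case $\lmf=\true$: its well-foundedness relies on using $d_{G'}(r,v)$ as the measure, and the chosen landmark $r'$ must satisfy $r'\neq v$ so that $d_{G'}(r',v)$ provides a strictly smaller instance of the lemma. The boundary case $v\in R$ needs a separate short argument, since for a landmark $v$ the distance to $r$ is kept in the highway $H$ rather than as an $r$-label in $L'(v)$; the statement is then read modulo this convention. All other steps reduce to direct applications of Definition~\ref{def:highway-cover} and the minimality of $\Gamma'$.
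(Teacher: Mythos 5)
Your overall strategy is sound and is essentially a first-principles expansion of what the paper asserts in two lines: the paper's proof simply notes that by minimality the $r$-label exists iff it is non-redundant, and that redundancy is equivalent to some shortest $r$--$v$ path passing through another landmark (leaning on the uniqueness/containment result for minimal highway cover labellings cited from \cite{farhan2018highly}). Your treatment of the $\lmf=\false$ case via a triangle-equality contradiction matches that reasoning exactly, and the $d=\infty$ case is fine in substance, though the min in Definition~\ref{def:highway-cover} need not be over an empty set --- $L'(v)$ may contain other entries --- rather, every contribution is $\infty$ because a finite one would force $d_{G'}(r,v)<\infty$.

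The genuine gap is in the induction for the case $\lmf=\true$, $d<\infty$. You choose $r'$ as the landmark nearest to $v$ on \emph{one} shortest $r$--$v$ path and claim $\dlmp(r',v)=(d_{G'}(r',v),\false)$. But the landmark flag of $\dlmp(r',v)$ is $\true$ as soon as \emph{any} shortest $r'$--$v$ path in $G'$ passes through another landmark (the ordering $\true<\false$ is designed for exactly this), and a different shortest $r'$--$v$ path may do so even though your chosen subpath does not. In that situation your inductive hypothesis yields that $v$ has \emph{no} $r'$-label, and the claim that Equation~\ref{equ:highway-cover} is met ``via $r'$ alone'' collapses. The repair is to pick $r'$ minimizing $d_{G'}(r',v)$ over all landmarks distinct from $r$ lying on \emph{any} shortest $r$--$v$ path; then no shortest $r'$--$v$ path can contain a further landmark in its interior (that would contradict the minimality of $d_{G'}(r',v)$, since such a landmark also lies on a shortest $r$--$v$ path), and the induction goes through. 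A second, smaller point you gloss over (as does the paper): concluding from redundancy at the pair $(r,v)$ that the $r$-label can be dropped requires that Equation~\ref{equ:highway-cover} survives for every pair $(r'',v)$, not just $(r,v)$; this does hold, via $d_{G'}(r'',r)+d_{G'}(r,r')\geq d_{G'}(r'',r')$, but it deserves a sentence.
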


\begin{proof}
If $v$ has any $r$-label in $\Gamma'$ it must be $(r,d)$.
As $\Gamma'$ is minimal, this $r$-label exists iff it is not redundant.
For $d=\infty$ redundancy of $(\infty, r)$ is obvious.
Otherwise $(d,r)$ is redundant iff the correct distance could also be computed using the highway.
This happens iff a shortest path between $r$ and $v$ passes through another landmark, which is indicated by the landmark flag.
\end{proof}

\begin{lemma}\label{L:ld-affected}
A vertex $v$ is LD-affected iff $\dlm(r,v)\neq\dlmp(r,v)$.
\end{lemma}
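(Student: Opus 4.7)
The plan is to pivot on Lemma~\ref{L:r-label}, which already pins down the $r$-label of $v$ in $\Gamma'$ purely from the pair $\dlmp(r,v)=(d,\lmf)$ (and similarly the $r$-label in $\Gamma$ from $\dlm(r,v)$). Since the first component of $\dlm(r,v)$ is by definition exactly $d_G(r,v)$, equality/inequality of the distance component controls whether $v$ is distance-affected, and the landmark flag controls the remaining freedom in the label. So the proof should essentially be a case split on which of the two components of the landmark-distance tuple changes.

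For the forward direction ($\dlm(r,v)\neq\dlmp(r,v)\Rightarrow$ LD-affected), I would split into two cases. Case~(a): the distance components differ. Then $d_G(r,v)\neq d_{G'}(r,v)$, so $v$ is distance-affected and hence LD-affected. Case~(b): the distance components agree, call it $d$, but the landmark flags differ. If $d=\infty$ the flags are both vacuous, so this case cannot arise; otherwise Lemma~\ref{L:r-label} says that on the side with flag $\true$ the vertex $v$ has no $r$-label, while on the side with flag $\false$ it has the $r$-label $(r,d)$. Thus the $r$-label of $v$ changes and $v$ is label-affected, hence LD-affected.

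For the converse direction ($\dlm(r,v)=\dlmp(r,v)\Rightarrow$ not LD-affected), both components match. Equality of the first component gives $d_G(r,v)=d_{G'}(r,v)$, so $v$ is not distance-affected. Applying Lemma~\ref{L:r-label} to the common pair $(d,\lmf)$ under both $\Gamma$ and $\Gamma'$ yields identical $r$-labels for $v$ (either both absent, or both equal to $(r,d)$), so $v$ is not label-affected either.

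The main subtlety I anticipate is the treatment of the $d=\infty$ boundary: one needs to agree that $\dlm(r,v)$ collapses to a single canonical value when no path exists (so that the landmark flag cannot independently change), and that Lemma~\ref{L:r-label} covers that case. Beyond that small bookkeeping point, the argument is a mechanical case analysis driven by Lemma~\ref{L:r-label}, with no new combinatorial content.
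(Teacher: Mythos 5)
Your proposal is correct and follows essentially the same route as the paper's proof: reduce to the case of equal distance components and then use Lemma~\ref{L:r-label} (applied to both $\Gamma$ and $\Gamma'$) to equate label-affectedness with a change in the landmark flag, handling $d=\infty$ as a trivial boundary case. The extra care you take with the $\infty$ case matches the paper's brief remark that this case is trivial.
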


\begin{proof}
Let $\lmf_G$ and $\lmf_{G'}$ denote the landmark flags of $\dlm(r,v)$ and $\dlmp(r,v)$, respectively.
Condition (ii) of Definition~\ref{def:ld-affected} states $d_G(r,v)\neq d_{G'}(r,v)$.
It suffices to show that for $d_G(r,v)=d_{G'}(r,v)$ condition (i) holds iff $\lmf_G\neq\lmf_{G'}$. This is trivial for $d_G(r,v)=d_{G'}(r,v)=\infty$.
For finite distances it follow from Lemma~\ref{L:r-label}.
\end{proof}

Like Algorithm~\ref{algo:affected}, our improved batch search algorithm computes a superset of the set of all LD-affected vertices, albeit a smaller one.
By Lemma~\ref{L:ld-affected} we need to return a vertex whenever its landmark distance changes.
Thus we improve upon Algorithm~\ref{algo:affected} by tweaking the pruning conditions:
\begin{itemize}[leftmargin=*]
\item[--] Insertion: To affect the landmark distance, the landmark length of a new path $p_\text{new}$ from $r$ to $v$ must be strictly smaller than the current landmark distance between $r$ and $v$.
Thus we check $\lml{p_\text{new}}<\dlm(r,v)$.
\item[--] Deletion: A deleted path $p_\text{del}$ can only affect landmark distance if its landmark length was minimal, i.e., equal to the old landmark distance.
This suggests checking $\lml{p_\text{del}}=\dlm(r,v)$.
However, deleted paths may be obscured by shorter composite paths, so we check $\lml{p_\text{del}}\leq\dlm(r,v)$ instead.
\end{itemize}

The effects of these optimizations can be observed in Example~\ref{E:label-affected}, where $v$ will not be returned for case $(a)$ and case $(c)$.

To apply the new pruning conditions, we must know the landmark length of a path we are following, and whether or not it passes through a deleted edge.
Thus we track not only the length of each path, but also a landmark flag and a deletion flag.

\vspace{5pt}
\begin{definition}[extended landmark length]
The \emph{extended landmark length} of a path $p$ starting from $r\in R$ is a tuple $(d,\lmf,\delf)\in\mathbb{N}\times\mathbb{B}\times\mathbb{B}$ where
\begin{itemize}
\item $(d,\lmf)$ is the landmark length of $p$, and
\item $\delf$ is the \emph{deletion flag}, with $\delf=\true$ iff $p$ passes through a deleted edge.
\end{itemize}
We use lexicographical order for comparison, with $\true < \false$.
\end{definition}

For ease of extending landmark length values we will flatten tuples implicitly, i.e., we treat $((d,\lmf),\delf)$ as $(d,\lmf,\delf)$. The choice of the ordering $\true < \false$ for the deletion flag is not arbitrary.
When multiple search paths merge, we only track the length of the shorter one w.r.t. extended landmark length.
To ensure that deleted paths will not be pruned using the stricter condition for insertion, we need to keep the deletion flag if \emph{any} path has it, which is achieved by ordering $\true < \false$.

We apply our pruning conditions by comparing the extended landmark lengths computed for paths ending in $v$ to the landmark distance of $v$ in $G$.
For this we identify the minimal extended landmark length that indicates LD-affectedness.

\begin{lemma}\label{L:dmax}
Let $v$ be LD-affected w.r.t. $r$, and $\dmax$ defined as
\[
\dmax(r,v) :=
\big(\dlm(r,v),\; \true\big)
\]
Any composite path of minimal extended landmark length equals to $\dmax(r,v)$ or less and pass through an updated edge.
\end{lemma}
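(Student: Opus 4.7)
The plan is to invoke Lemma~\ref{L:ld-affected} to convert LD-affectedness into the tuple inequality $\dlm(r,v)\neq\dlmp(r,v)$, and then split into two symmetric cases depending on which side is smaller. In each case I would exhibit a concrete composite path from $r$ to $v$ that traverses an updated edge and whose extended landmark length is at most $\dmax(r,v)=(\dlm(r,v),\true)$.

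First I would treat the case $\dlmp(r,v)<\dlm(r,v)$. Take a path $p$ in $G'$ whose landmark length realizes $\dlmp(r,v)$. If $p$ used only edges that also lie in $G$, then $p$ would be a path in $G$, forcing $\dlm(r,v)\leq\lml{p}=\dlmp(r,v)$ and contradicting the case hypothesis. So $p$ traverses at least one inserted edge. Viewed as a composite path with empty $G$-prefix, $p$ has deletion flag $\false$ (no deleted edge can appear in $G'$), so its extended landmark length is $(\dlmp(r,v),\false)$; this already lies strictly below $(\dlm(r,v),\true)=\dmax(r,v)$ in the lexicographic order since the first two coordinates are strictly smaller.

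The symmetric case $\dlm(r,v)<\dlmp(r,v)$ is handled dually: take a path $p$ in $G$ realizing $\dlm(r,v)$; if $p$ also lay in $G'$ we would get $\dlmp(r,v)\leq\lml{p}=\dlm(r,v)$, contradicting the hypothesis, so $p$ crosses a deleted edge. As a composite path with empty $G'$-suffix, its deletion flag is $\true$, yielding extended landmark length exactly $\dmax(r,v)$. Both cases therefore produce the desired path, and disconnected boundary cases (such as $\dlm(r,v)=\infty$ or $\dlmp(r,v)=\infty$) slot into the same dichotomy without extra work.

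I expect the only delicate point to be bookkeeping of the lex order with $\true<\false$: the ordering on the deletion flag must be chosen so that the exhibited path in the insertion case, which carries flag $\false$, still compares strictly below $\dmax(r,v)$, which carries flag $\true$. This is precisely why the case hypothesis needs to give strictness on the first two coordinates, and why the inequality in the lemma is ``$\leq\dmax(r,v)$'' rather than equality. Once that convention is pinned down, the rest of the argument is routine and does not require reasoning about the interior of composite paths.
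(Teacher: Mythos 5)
Your two-case split on $\dlm(r,v)$ versus $\dlmp(r,v)$ via Lemma~\ref{L:ld-affected} is the same decomposition the paper uses, and the two witness paths you exhibit are correct: a $G'$-path realizing $\dlmp(r,v)$ must use an inserted edge and has extended landmark length $(\dlmp(r,v),\false)$, which is strictly below $\dmax(r,v)$, while a $G$-path realizing $\dlm(r,v)$ must use a deleted edge and has extended landmark length exactly $\dmax(r,v)$. The problem is a quantifier mismatch: the lemma is a universal statement --- \emph{every} composite path of minimal (extended) landmark length has extended landmark length at most $\dmax(r,v)$ \emph{and} passes through an updated edge --- whereas you prove only that \emph{some} composite path with these properties exists. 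The universal form is what the paper actually needs downstream: in the proof of Lemma~\ref{L:algo-eld-correct} the search may discard individual members of $P_{\min}$ at the merge step in line~\ref{ln:eld-merge}, and correctness rests on the guarantee that whichever minimal path survives still passes through an updated edge (otherwise the search is never seeded on it) and still satisfies the pruning test of line~\ref{ln:eld-prune} along its whole length. Your closing remark that the argument ``does not require reasoning about the interior of composite paths'' signals that this part has not been addressed.

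The gap is bridgeable with one extra observation, so the repair is short. Your witnesses show that the minimal landmark length over all composite paths from $r$ to $v$ is at most $\min\big(\dlm(r,v),\dlmp(r,v)\big)$. Any composite path that traverses no inserted and no deleted edge lies in both $G$ and $G'$, hence has landmark length at least $\max\big(\dlm(r,v),\dlmp(r,v)\big)$, which is strictly larger than that minimum because LD-affectedness gives $\dlm(r,v)\neq\dlmp(r,v)$ by Lemma~\ref{L:ld-affected}. Therefore every composite path of minimal landmark length does pass through an updated edge, and the bound on its extended landmark length follows by the same flag bookkeeping you already carried out; in the case $\dlm(r,v)<\dlmp(r,v)$ one must additionally note that a minimal path avoiding all deleted edges would lie entirely in $G'$ and hence be too long, which forces the deletion flag of every minimal path to be $\true$ and yields the comparison against $\dmax(r,v)=\big(\dlm(r,v),\true\big)$. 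Adding this step turns your existence argument into the universal statement the paper relies on.
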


\begin{proof}
In the following we shall always refer to composite paths from $r$ to $v$.
By Lemma~\ref{L:ld-affected} we have $\dlm(r,v) \neq \dlmp(r,v)$.

(1) If $\dlm(r,v) < \dlmp(r,v)$ then all paths of minimal landmark length must pass through a deleted edge.
That makes their extended landmark length $\dmax(r,v)$ or less.

(2) If $\dlm(r,v) > \dlmp(r,v)$ then all paths  of minimal landmark length must pass through an inserted edge.
Their landmark length is at most $\dlmp(r,v)$, so their extended landmark length is strictly less than $\dmax(r,v)$.
\end{proof}
\begin{algorithm}[t]
\caption{Improved Batch Search}\label{algo:new-affected}
\SetCommentSty{textit}
\SetKwFunction{FMain}{BatchSearch}
\SetKwProg{Fn}{Function}{}{end}
\Fn{\FMain{$G'$, $B$, $r$, $\Gamma$}}{
    \ForEach{$(a,b)\in B$}{
        $\delf\gets (a,b)$ is deleted \\
        \If{$d_G(r,a) < d_G(r,b)$}{
            add $\big(\dlm(r,a)\oplus b,\; \delf,\; b\big)$ to $\mathcal{Q}$ \\
        }
        \ElseIf{$d_G(r,a) > d_G(r,b)$}{
            add $\big(\dlm(r,b)\oplus a,\; \delf,\; a\big)$ to $\mathcal{Q}$ \\
        }
    }
    \While{$\mathcal{Q}$ is not empty}{
        remove minimal $(d,\lmf,\delf,v)$ from $\mathcal{Q}$ \\
        \If{$v\notin\vcpaff$}{\label{ln:eld-merge}
            add $v$ to $\vcpaff$ \\
            \ForEach{$w \in N_{G'}(v)$}{
                $d_w\gets \big((d,\lmf)\oplus w,\; \delf\big)$ \\
                \If{$d_w\leq \dmax(r,w)$}{\label{ln:eld-prune}
                    add $(d_w,w)$ to $\mathcal{Q}$ \\
                }
            }
        }
    }
    \KwSty{return} $\vcpaff$
}
\end{algorithm}
Batch search with improved pruning is described in Algorithm~\ref{algo:new-affected}.
As we frequently need to update the landmark length of a path when appending another vertex, we define an operator for this:
\[
(d,\lmf) \oplus w :=
\begin{cases}
(d + 1, \true) & \text{if $w$ is a landmark} \\
(d + 1, \lmf) & \text{otherwise}
\end{cases}
\]

We finally show the correctness of Algorithm~\ref{algo:new-affected}, i.e., that all LD-affected vertices are included in its result set.
Note that some additional vertices may be returned as well.

\begin{lemma}\label{L:algo-eld-correct}
Algorithm~\ref{algo:new-affected} returns all LD-affected vertices.
\end{lemma}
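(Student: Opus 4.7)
The plan is to show that for every LD-affected vertex $v$, the algorithm eventually inserts $v$ into $\vcpaff$. By Lemma~\ref{L:dmax}, fix a composite path $p$ from $r$ to $v$ whose extended landmark length is at most $\dmax(r,v)$ and which traverses at least one updated edge. The goal is to exhibit an execution of Algorithm~\ref{algo:new-affected} that traces a suitable sub-path of $p$ and enqueues $v$.

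First I would identify a distinguished updated edge $(a,b)\in B$ on $p$, chosen as the latest updated edge along $p$ with $d_G(r,a)\neq d_G(r,b)$; WLOG $a$ is the pre-anchor and $b$ the anchor. Because $(a,b)$ is the last updated edge of $p$, every subsequent edge of $p$ belongs to $G\cap G'\subseteq G'$, so the suffix $q$ from $b$ to $v$ lies entirely in $G'$. The initialization step enqueues the tuple $(\dlm(r,a)\oplus b,\delf,b)$, and because $\dlm(r,a)$ is by definition optimal over $G$-paths to $a$, the initial extended landmark length at $b$ is no larger than that of the prefix of $p$ ending at $b$.

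I would then induct along $q$: for each vertex $w$ on $q$, a tuple for $w$ is eventually enqueued with extended landmark length at most that of the corresponding prefix of $p$ from $r$ to $w$. The inductive step combines the fact that the operator $\oplus$ mirrors the extension of a landmark length by one edge in $G'$, with the ordering $\true<\false$ on both flags, which guarantees that the merging behavior at Line~\ref{ln:eld-merge} preserves any deletion flag carried by some path to $w$; this is essential because a deleted path must not be pruned by the strict inequality reserved for insertions. Applying the induction at $w=v$ together with the bound from Lemma~\ref{L:dmax} yields a tuple for $v$ that survives the pruning test, so $v$ is added to $\vcpaff$ when this tuple is dequeued.

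The main obstacle is verifying that the pruning check $d_w\leq\dmax(r,w)$ holds at every intermediate vertex of $q$, since $\dmax(r,\cdot)$ need not be monotone along $p$. My plan is to dispatch this via a swap argument: if pruning failed at some intermediate $w$, then $\dlm(r,w)$ would witness a strictly shorter landmark-length $G$-path from $r$ to $w$, and splicing this $G$-prefix with the remainder of $p$ from $w$ to $v$ would produce a composite path from $r$ to $v$ of strictly smaller extended landmark length that still traverses an updated edge (either the edge $(a,b)$ when it lies on the retained suffix, or a deleted edge inherited from the swapped-in $G$-prefix, whose existence is forced by $v$ being LD-affected). This contradicts the minimality of $p$ provided by Lemma~\ref{L:dmax}, so such a failure cannot occur, and the inductive argument goes through.
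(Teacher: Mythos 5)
There is a genuine gap in your base case: you anchor the induction at the \emph{latest} updated edge of $p$, but the paper (see the proof of Lemma~\ref{L:batch-search-cpa}) deliberately anchors at the \emph{last deleted or first inserted} edge. The distinction matters. On a composite path all deleted edges precede all inserted ones, so if $p$ traverses two or more inserted edges, your distinguished edge $(a,b)$ is the last of them and the prefix $p_{ra}$ already contains an earlier inserted edge; it is therefore not a path in $G$, and $\dlm(r,a)$ — a minimum taken over $G$-paths only — need not lower-bound $\lml{p_{ra}}$. The seed value $\dlm(r,a)\oplus b$ can then strictly exceed the landmark length of $p$'s prefix at $b$ (the old distance to $a$ may be arbitrarily large when the new shortness of $p$ comes from the earlier insertion), so the propagated tuples along $q$ can fail the test $d_w\leq\dmax(r,w)$ even though $p$ itself is minimal, and your induction never reaches $v$. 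The algorithm still succeeds in such cases, but only via the search seeded at the anchor of the \emph{first} inserted edge, whose pre-anchor prefix does lie in $G$ and whose suffix lies in $G'$ — exactly the invariant your argument needs and your choice of edge destroys.

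A second, smaller issue is the fallback in your swap argument. When the pruning test fails at an intermediate $w$ only because of the deletion flag (i.e., the prefix has landmark length exactly $\dlm(r,w)$ but $\delf=\false$), splicing in a minimal $G$-path to $w$ does not produce a composite path of \emph{strictly} smaller extended landmark length unless that $G$-path happens to contain a deleted edge, and nothing about $v$ being LD-affected forces this — the deleted edges witnessing $v$'s affectedness need not lie on any minimal $G$-path to $w$. The paper resolves this tie case differently: it argues that if the only surviving route to $v$ has landmark length equal to $\dlm(r,v)$ and passes through no deleted edge, then $\dlm(r,v)=\dlmp(r,v)$ and $v$ is simply not LD-affected, so nothing is lost. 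Your overall architecture (induct along the $G'$-suffix, justify intermediate pruning by a minimality contradiction) is sound and close to the paper's, but both of these steps need the paper's versions of the choices to go through.
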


\begin{proof}[Proof sketch]
Let $v$ be LD-affected, and $P_{\min}$ be the set of all composite paths from $r$ to $v$ of minimal landmark length.
By Lemma~\ref{L:dmax} these (and all their prefixes) meet the pruning condition in line~\ref{ln:eld-prune} and pass through an updated edge.
Thus the search in Algorithm~\ref{algo:new-affected} will follow them, starting from the last deleted or first inserted edge.
While some paths may be pruned in line~\ref{ln:eld-merge}, the search will still follow at least one path $p\in P_{\min}$ with minimal landmark length.
While its extended landmark length may not be minimal,
this only causes $p$ to be pruned if its landmark length equals $\dlm(r,v)$ and $p$ does not pass through a deleted edge.
But in this case, $v$ is not LD-affected.
\end{proof}

\subsection{Batch Repair}

In the following, we develop an efficient algorithm to repair labels.
At its core is an inference mechanism for the distances of affected vertices, which allows us to update their labels.
Here we start with \emph{boundary vertices} that lie on the boundary of affected and unaffected vertices, and for which the distance to $r$ can be computed from neighboring vertices whose distance did not change.
Importantly, even though a vertex may be affected by multiple edge updates in a batch, its $r$-label only needs to be updated once.

Let $v\in\vcpaff$.
For every neighbour $w$ of $v$ in $G'$, $d_{G'}(r,v)$ must be upper-bounded by $d_{G'}(r,w)+1$.
If such a neighbour lies outside of $\vcpaff$, the value of $d_{G'}(r,w)=d_G(r,w)$ can easily be obtained.
By taking the minimum of such known upper bounds, we get a readily available distance bound for $v$.
As we wish to eliminate redundant $r$-labels, we track landmark distance.

\newcommand*{\dlmb}{d^L_\textsc{bou}}
\newcommand*{\dbou}{d_\textsc{bou}}
\begin{definition}[Landmark distance bound]
Let $S\subset V\setminus\{r\}$ be a set of vertices. {The \emph{landmark distance bound} of $v$ w.r.t. $S$ is:}
\[
\dlmb(v,S) := \min \{ \dlmp(r,w)\oplus v \mid w \in N_{G'}(v)\setminus S \};
\]
{and the \emph{distance bound} of $v$ w.r.t. $S$ is:}
\[
\dbou(v,S) := \min \{ d_{G'}(r,w) + 1 \mid w \in N_{G'}(v)\setminus S \}.
\]
\end{definition}

Note that the distance bound of a vertex is simply the distance component of its landmark distance bound.
The following lemma allows us to compute the (landmark) distance of vertices in $\vcpaff$ from $r$ in $G'$ using their (landmark) distance bounds.

\begin{lemma}\label{L:min-distance-bound}
Let $S\subset V\setminus\{r\}$ and $v\in S$ with minimal distance bound.
Then $\dlmp(r,v)=\dlmb(v,S)$.
\end{lemma}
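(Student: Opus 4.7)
My plan is to prove the two inequalities $\dlmp(r,v) \leq \dlmb(v,S)$ and $\dlmb(v,S) \leq \dlmp(r,v)$ separately.

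The first direction is routine: for every $w \in N_{G'}(v) \setminus S$, concatenating a path from $r$ to $w$ of minimal landmark length with the edge $(w,v)$ yields a path from $r$ to $v$ of landmark length $\dlmp(r,w) \oplus v$. Since $\dlmp(r,v)$ is the lexicographic minimum over all paths from $r$ to $v$ in $G'$, we have $\dlmp(r,v) \leq \dlmp(r,w) \oplus v$ for every such $w$, and hence $\dlmp(r,v) \leq \dlmb(v,S)$.

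The core of the proof is the reverse inequality. I would fix any path $p = (r = u_0, u_1, \ldots, u_k = v)$ in $G'$ achieving $\lml{p} = \dlmp(r,v)$, and aim to show that its penultimate vertex $w = u_{k-1}$ lies in $N_{G'}(v) \setminus S$. Once this is in hand, the prefix $p'$ of $p$ from $r$ to $w$ satisfies $\lml{p'} \geq \dlmp(r,w)$, and since the operator $\oplus v$ is order-preserving in the lexicographic order (using $\true < \false$, as is immediate from its two defining cases), we obtain $\dlmp(r,v) = \lml{p'} \oplus v \geq \dlmp(r,w) \oplus v \geq \dlmb(v,S)$, which finishes the argument.

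The main obstacle is therefore showing that the penultimate vertex $w = u_{k-1}$ cannot lie in $S$. My approach is by contradiction: suppose $w \in S$. Since $r \notin S$, there is a largest index $i$ with $u_i \notin S$, and this $i$ satisfies $i \leq k-2$ and $u_{i+1} \in S$. Because $p$ is a shortest path in $G'$, its prefix up to $u_i$ is also shortest, so $d_{G'}(r,u_i) = i$. Then $u_i \in N_{G'}(u_{i+1}) \setminus S$ witnesses $\dbou(u_{i+1}, S) \leq i+1$, while the triangle inequality gives $\dbou(v, S) \geq d_{G'}(r,v) = k > i+1$. Since $u_{i+1} \in S$, this strictly smaller distance bound contradicts the minimality of $v$. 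A subtlety to keep track of throughout is the landmark flag, in particular the case where $v$ itself is a landmark (so that $\oplus v$ forcibly sets the flag to $\true$); this is absorbed cleanly by the monotonicity of $\oplus v$ noted above, so no separate case analysis is required.
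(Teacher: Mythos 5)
Your proof is correct and follows essentially the same route as the paper: both take a path of minimal landmark length from $r$ to $v$ and use the minimality of $v$'s distance bound within $S$ to force the predecessor of $v$ on that path to lie outside $S$ (the paper phrases this as showing that the first $S$-vertex on the path is $v$ itself; you show that the penultimate vertex is not in $S$ via the last out-of-$S$ index, which is the same contradiction). The only omission is the degenerate case $d_{G'}(r,v)=\infty$, where no such path exists and your second inequality holds trivially; the paper dispatches this with a one-line remark, and yours needs the same.
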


\begin{proof}
For $d_{G'}(r,v)=\infty$ this is trivial.
Otherwise let $p$ be a shortest path from $r$ to $v$ in $G'$ w.r.t. landmark length, $v'$ the first vertex in $p$ that lies in $S$, and $w$ its predecessor in $p$.
Since $w\notin S$ we have $\dlmb(v',S) \leq \dlmp(r,w) \oplus v = \dlmp(r,v')$.
If $v'\neq v$ then $d_{G'}(r,v') < d_{G'}(r,v) \leq \dbou(v,S)$, and therefore $\dbou(v',S) < \dbou(v,S)$.
This contradicts the minimality of $\dbou(v,S)$, so $v'=v$.
It follows that $\dlmb(v,S) = \dlmp(r,v)$.
\end{proof}

Note that $\dlmp(r,v)=\dlmb(r,S)$ does not generally hold for every boundary vertex $v$. This can e.g. be seen in the graph of Example~\ref{E:BatchHL-full}: when computing distances to $r_1$, $e$ must be repaired before $f$, as the new shortest path between $r_1$ and $f$ passes through $e$.
The situation is reversed when computing distance to $r_2$.
\newcommand*{\vardlmb}{D_\textsc{bou}}
\begin{algorithm}[t]
\SetCommentSty{textit}
\caption{Batch Repair}\label{algo:algo-update}
\SetKwFunction{FMain}{\textsc{BatchRepair}}
\SetKwProg{Fn}{Function}{}{end}
\Fn{\FMain{$G'$, $\vaff$, $r_i$, $\Gamma$, $\Gamma'$}}{
    \ForEach{$v\in\vaff$}{
        $\vardlmb[v]\gets \dlmb(v,\vaff)$ \label{ln:br-init}
        \tcp{use $\Gamma$ to compute}
    }
    \While{$\vaff$ is not empty}{
        $V_{\min}\gets\{ v\in\vaff \mid \vardlmb[v].d \text{ is minimal}\}$ \\
        remove $V_{\min}$ from $\vaff$ \\
        \ForEach{$v\in V_{\min}$}{
            \If{$\vardlmb[v].d=\infty \lor \vardlmb[v].\lmf$}{
                remove $r$-label from $\Gamma'(v)$ \\
            }
            \Else{
                set $r$-label of $\Gamma'(v)$ to $(r_i, \vardlmb[v].d)$ \\
            }
            \If{$v$ is a landmark}{
                $\delta'_H(r_i, v) \gets \vardlmb[v].d$ \\
            }
            \ForEach{$w \in N_{G'}(v)\cap\vaff$}{
                $\vardlmb[w]\gets\min(\vardlmb[w], \vardlmb[v]\oplus w)$ \\
            }
        }
    }
}
\end{algorithm}
Algorithm \ref{algo:algo-update} shows the pseudo-code of our batch repair algorithm.
Given a graph $G'$ and a set of all affected vertices $\vaff$, we first compute the landmark distance bounds of vertices in $\vaff$ using their unaffected neighbors.
We then find vertices in $\vaff$ with minimal distance bounds and remove them from $\vaff$.
By Lemma~\ref{L:min-distance-bound} their landmark distance to $r$ in $G'$ equals their landmark distance bounds.
We use these landmark distances to update their $r$-labels, as well as their highway distances in the case of landmarks.
Finally we update the landmark distance bounds of neighboring vertices in $\vaff$.
We continue this process until $\vaff$ is empty.

\subsection{Analysis of BatchHL}

In the following we will show correctness of Algorithm~\ref{algo:seq-algo} and analyse its time complexity.

\begin{theorem}
The highway labeling $\Gamma'$ returned by Algorithm~\ref{algo:seq-algo} is the minimal highway labeling for $G'$.
\end{theorem}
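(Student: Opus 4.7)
The plan is to reduce the theorem to a per-landmark statement: for every $r\in R$, after the iteration of Algorithm~\ref{algo:seq-algo} handling $r$, every vertex $v$ carries the $r$-label prescribed by the unique minimal highway cover labelling for $G'$, and the highway stores $d_{G'}(r,r')$ for each landmark $r'$. Because iterations for different landmarks modify disjoint portions of $\Gamma'$ (namely the $r$-label of each vertex and the $r$-row of $H$) and BatchSearch only consults the input labelling $\Gamma$ rather than the progressively modified $\Gamma'$, the iterations do not interact. Combined with the uniqueness of the minimal highway cover labelling recalled in Section~\ref{sec:preliminaries}, this yields the theorem.

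\textbf{Per-landmark argument.} By Lemma~\ref{L:r-label}, the minimal $r$-label of $v$ is completely determined by $\dlmp(r,v)$, so it is enough to show that the $r$-label assigned to $v$ after processing $r$ is consistent with $\dlmp(r,v)$. For $v\notin\vcpaff$, Lemma~\ref{L:algo-eld-correct} implies $v$ is not LD-affected, hence $\dlmp(r,v)=\dlm(r,v)$ and the $r$-label inherited from $\Gamma$ (which was minimal for $G$) is already correct by Lemma~\ref{L:r-label} applied to $G$. For $v\in\vcpaff$, I would argue that Algorithm~\ref{algo:algo-update} assigns exactly the label derived from $\dlmp(r,v)$, and that when $v$ is itself a landmark the value $\vardlmb[v].d$ written into $\delta'_H(r,v)$ equals the true distance $d_{G'}(r,v)$, so the highway property is preserved.

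\textbf{Inductive invariant and obstacle.} The core of the argument is a Dijkstra-style invariant for Algorithm~\ref{algo:algo-update}: at the moment a batch $V_{\min}$ is popped, $\vardlmb[v]=\dlmb(v,\vaff_\text{curr})$ for every remaining $v\in\vaff_\text{curr}$. Lemma~\ref{L:min-distance-bound} then upgrades this to $\vardlmb[v]=\dlmp(r,v)$ for each $v\in V_{\min}$, after which the assignment step correctly installs the minimal label via Lemma~\ref{L:r-label}. I would establish the invariant by induction on the distance component of $\vardlmb[v]$: the base case uses line~\ref{ln:br-init}, where contributions from neighbours $w$ originally outside $\vcpaff$ are safe to read from $\Gamma$ because such $w$ are not LD-affected by Lemma~\ref{L:algo-eld-correct}, so $\dlmp(r,w)=\dlm(r,w)$; in the inductive step, the inner loop $\vardlmb[w]\gets\min(\vardlmb[w],\vardlmb[v]\oplus w)$ propagates the just-confirmed value $\vardlmb[v]=\dlmp(r,v)$ to every still-pending neighbour, maintaining the invariant. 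The delicate point, and the main obstacle, is ruling out a shorter landmark-length path from $r$ to $v$ in $G'$ that exits $\vaff_\text{curr}$ through a yet-unprocessed vertex; Lemma~\ref{L:min-distance-bound} disposes of this cleanly, but only once one verifies that all ``outside'' neighbour values aggregated into $\vardlmb[v]$ really equal $\dlmp(r,w)$, which is where the containment $\vldaff\subseteq\vcpaff$ from Lemma~\ref{L:algo-eld-correct} is indispensable. A secondary subtlety is that $\vardlmb[v]$ is a landmark-length tuple and the join $\oplus$ must propagate the landmark flag consistently with the criterion in Lemma~\ref{L:r-label}, so the output is genuinely minimal rather than merely distance-correct.
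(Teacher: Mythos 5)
Your proposal is correct and follows essentially the same route as the paper's proof: reduce to per-landmark correctness, use Lemma~\ref{L:algo-eld-correct} to justify reading unaffected neighbours' landmark distances from $\Gamma$ at line~\ref{ln:br-init}, apply Lemma~\ref{L:min-distance-bound} to the minimal-bound vertices, and conclude via Lemma~\ref{L:r-label}. The only difference is that you make explicit the Dijkstra-style invariant ($\vardlmb[v]=\dlmb(v,\vaff_\text{curr})$ throughout) and the non-interaction of the per-landmark iterations, both of which the paper leaves implicit.
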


\begin{proof}
By Lemmas~\ref{L:batch-search-cpa} and~\ref{L:algo-eld-correct}, the vertex set $\vaff$ returned by \texttt{BatchSearch} contains all LD-affected vertices, regardless of which Algorithm (\ref{algo:affected} or~\ref{algo:new-affected}) is used. By Lemma~\ref{L:ld-affected} this means that for vertices outside of $\vaff$ the landmark distance to $r_i$ does not change, so that in line~\ref{ln:br-init} of Algorithm~\ref{algo:seq-algo} the value of $\dlmb(v,\vaff)$ can be computed from $\Gamma$.
From Lemma~\ref{L:min-distance-bound} it follows that $\vardlmb[v]=\dlmp(r_i,v)$ whenever vertex $v$ lies in $V_{\min}$.

For each landmark $r$ and each vertex LD-affected w.r.t. $r$ we update the $r$-label of $v$ in $\Gamma'$ based on its landmark distance to $r$ in $G'$.
By Lemma~\ref{L:r-label} these updates are correct.
As the $r$-labels of vertices outside of $\vaff$ do not change, and we initialized $\Gamma'$ using $\Gamma$, this leave all vertices with correct $r$-labels, for all $r\in R$, so the distance labeling of $\Gamma'$ is correct and minimal.
Highway is updated for vertices in $\vaff$ as well, for all $r\in R$, and do not change for others by Definition~\ref{def:ld-affected}.
\end{proof}

The following example illustrates the individual steps that our BatchHL algorithm runs through.

\begin{example}\label{E:BatchHL-full}
Consider the following graph and updates:
\begin{center}
\begin{tikzpicture}
\foreach \pos/\name/\lbl in {%
{(2,1.8)/a},
{(0,1)/b}, {(1,1)/r1/$r_1$}, {(2,1)/c}, {(3,1)/r2/$r_2$}, {(4,1)/d},
{(0,0)/e}, {(1,0)/f}, {(2,0)/g}, {(3,0)/h}, {(4,0)/i}}
    \node[vertex] (\name) at \pos {\lbl};
\foreach \source/\dest in {%
r1/a, r1/b, r1/c, r2/c, r2/d, r2/g,
b/e, d/i, f/g, g/h, h/i}
    \path[edge] (\source) -- (\dest);
\foreach \name in {r1,r2}
    \draw (\name) circle (6pt);
\path[edge,dashed] (r1) edge[right] node{-} (f);
\path[edge,dashed] (r2) edge[above right] node{+} (a);
\path[edge,dashed] (e) edge[above] node{+} (f);
\end{tikzpicture}
\end{center}
The initial highway labeling $\Gamma=(H,L)$ will look like this:
\begin{align*}
H &= \{ \delta_H(r_1,r_2) = 2 \},\\
L &=
\setlength{\arraycolsep}{1pt}
\begin{array}{c|c|c|c|c|c|c|c|c}
a & b & c & d & e & f & g & h & i \\
\hline
(r_1,1) & (r_1,1) & (r_1,1) &         & (r_1,2) & (r_1,1) & (r_1,2) & (r_1,3) & \\
        &         & (r_2,1) & (r_2,1) &         & (r_2,2) & (r_2,1) & (r_2,2) & (r_2,2)
\end{array}
\end{align*}
BatchHL will initialize $\Gamma'$ as $\Gamma$, and then run BatchSearch and BatchRepair for both $r_1$ and $r_2$.

For $r_1$ the basic BatchSearch described as Algorithm~\ref{algo:affected} returns
\[
\vcpaff=\{ r_2, d, e, f, g, h, i \}
\]
Here vertex $e$ is not actually affected, but still returned due to the composite path $r_1-f-e$.
Algorithm~\ref{algo:new-affected} returns only
\[
\vcpaff=\{ e, f, g, h \}
\]
For $r_2,d$ and $i$, the new paths through $a$ have the same landmark length as existing ones and are thus pruned.
The eliminated path $r_1-f-g-h-i$ has strictly greater landmark length than the existing path through $r_2$, and thus is pruned.
Note that $e$ is still returned due to the composite path $r_1-f-e$, despite not being LD-affected.

One of these sets is then used as input for BatchRepair, say $\vaff=\{e,f,g,h\}$.
The initial landmark bounds for this set are
\[
\dlmb(r_1,\ldots) =
\begin{array}{c|c|c|c}
e & f & g & h \\
\hline
(2,\false) & (\infty,\false) & (3,\true) & (5,\true)
\end{array}
\]
Here $e$ has the minimal distance bound, so we update $L(e)$ by setting its $r_1$-label to $2$ (which does not actually change $L'(e)$).
Afterwards $e$ is removed from $\vaff$ and the landmark bound of $f$ is updated to $(3,\false)$.
In the next iteration $f$ and $g$ are minimal, so the $r_1$-label in $L(f)$ is updated to $(r_1,3)$ and the $r_1$-label in $L'(g)$ is removed.
Finally $\dlmb(r_1,h)$ is updated to $(4,\true)$ and the $r_1$-label in $L'(h)$ is removed.
This leaves $L'$ as
\[
L' =
\setlength{\arraycolsep}{1pt}
\begin{array}{c|c|c|c|c|c|c|c|c}
a & b & c & d & e & f & g & h & i \\
\hline
(r_1,1) & (r_1,1) & (r_1,1) &         & (r_1,2) & (r_1,3) &         &         & \\
        &         & (r_2,1) & (r_2,1) &         & (r_2,2) & (r_2,1) & (r_2,2) & (r_2,2)
\end{array}
\]
Running BatchSearch for $r_2$ gives us one of
\begin{align*}
\vaff = \{ r_1, a, b, e \} \text{ or }
\vaff = \{ a, e \}
\end{align*}
depending on which algorithm (Algorithms~\ref{algo:affected} or ~\ref{algo:new-affected}) is used.
Running BatchRepair on either of those inserts $(r_2,1)$ into $L'(a)$ and $(r_2,2)$ into $L'(e)$
for the final updated highway labeling
\[
L' =
\setlength{\arraycolsep}{1pt}
\begin{array}{c|c|c|c|c|c|c|c|c}
a & b & c & d & e & f & g & h & i \\
\hline
(r_1,1) & (r_1,1) & (r_1,1) &         & (r_1,2) & (r_1,3) &         &         & \\
(r_2,1) &         & (r_2,1) & (r_2,1) & (r_2,3) & (r_2,2) & (r_2,1) & (r_2,2) & (r_2,2)
\end{array}
\]
\end{example}

\vspace{0.1cm}
\noindent\textbf{Complexity analysis.~}
\n{Table \ref{table:complexity} compares the time and space complexity of the state-of-the-art methods and our proposed method BatchHL for constructing and updating a distance labelling, and querying a distance. Let $a$ be the total number of affected vertices, $l$ be the maximum label size and $d$ be the maximum degree. In our method, $a$ refers to CP-affected vertices in the batch-update setting which is different from $\textsc{FulFD}$ \cite{hayashi2016fully} and $\textsc{FulPLL}$ \cite{d2019fully} in the single-update setting. We perform $|R|$ BFSs to construct highway labelling in $O(|R| \cdot |V|)$ time and space. Then, we update highway labelling in Algorithm \ref{algo:seq-algo} where Algorithm \ref{algo:affected} visits $O(a)$ vertices and for each affected vertex performs $d$ queries to check its affected neighbors in $O(d \cdot l)$ time. Thus, the time complexity of Algorithms \ref{algo:affected} and \ref{algo:new-affected} is $O(a \cdot d \cdot l)$. Note that Algorithm \ref{algo:new-affected} further reduces the total number of CP-affected vertices and is naturally faster than Algorithm \ref{algo:affected}. In practice, $l$ and $d$ are closer to the average values, and $a$ is usually orders of magnitudes smaller than the total number of vertices in a graph. Next, Algorithm \ref{algo:algo-update} repairs CP-affected vertices returned by Algorithm \ref{algo:affected} which in the worse case could repair the labels of all CP-affected vertices. To decide whether the label of an affected vertex needs to be repaired, we check its neighbors in $O(d)$. Thus, the time complexity of Algorithm \ref{algo:algo-update} is $(a \cdot d)$, and the overall time complexity of Algorithm \ref{algo:seq-algo} is $O(|R| \cdot a \cdot d \cdot l)$ using $O(V)$ space. We omit $l$ from the time complexity of Algorithm \ref{algo:algo-update} because we store distances for all unaffected neighbors of affected vertices in Algorithms \ref{algo:affected} and \ref{algo:new-affected}.} 

\n{$\textsc{FulFD}$ constructs a bit-parallel shortest-path tree for each $r \in R$ and 64 of its neighbors $N$ and requires $O(R \cdot N)$ time and space for every vertex $v \in V$ which is huge for a large network because $V$ could be very large. Similarly, the construction time and space of $\textsc{FulPLL}$ (PLL) and $\textsc{PSL}^*$ \cite{li2019scaling} (Parallel PLL using $t$ cores) is prohibitive and this is the reason why we do not have results on large datasets for these methods in our experiments. For state-of-the-art dynamic methods $\textsc{FulFD}$ and $\textsc{FulPLL}$, $a$ is the sum over all affected vertices by each update in a batch and is very large in practice because of unnecessary computations due to which they take longer to update a distance labelling.}  

\begin{table}[h!]
\centering
\caption{\n{Complexity analysis and comparison. }}\vspace{-0.3cm}
\label{table:complexity}
 \scalebox{0.8}{\begin{tabular}{| l || l l | l l | l l |}  \hline
    \multirow{2}{*}{Method}&\multicolumn{2}{c|}{Construction}&\multicolumn{2}{c|}{Update}&\multicolumn{2}{c|}{Query} \\\cline{2-7}
    & Time & Space & Time & Space & Time & Space \\
    
    \hline\hline
    BatchHL & $O(R\! \cdot\! V)$ & $O(R \!\cdot\! V)$ & $O(R\! \cdot\! a\! \cdot\! d\! \cdot\! l)$ & $O(V)$ & $O(E)$ & $O(V)$ \\
    \textsc{FulFD} & $O(R\! \cdot\! N \!\cdot\! V)$ & $O(R\! \cdot\! N\! \cdot\! V)$ & $O(R\! \cdot\! N \!\cdot\! a \!\cdot\! d \!\cdot\! l)$ & $O(V)$ & $O(E)$ & $O(V)$ \\
    \textsc{FulPLL} & $O(l^2 \!\cdot \!E)$ & $O(V\! \cdot\! l)$ & $O(a\! \cdot\! (E\! +\! V\! \cdot\! l))$ & $O(V)$ & $O(l)$ & - \\
    $\textsc{PSL}^*$ & $O(l^2\! \cdot\! E / t)$ & $O(V\! \cdot\! l)$ & - & - & $O(d\! \cdot\! l)$  & - \\
    BiBFS & - & - & - & - & $O(E)$ & $O(V)$ \\\hline
\end{tabular}}\vspace{-0.5cm}
\end{table}
\section{Variants}\label{sec:variants}

\noindent\textbf{Parallel batch updates.~}BatchHL can be parallelized at the landmark level. Let $\Gamma=(H,L)$ \n{be the unique minimal highway cover labelling over $G$.} 
\n{Then the unique minimal highway cover labelling $\Gamma'=(H',L')$ over $G'$} may differ from $\Gamma$ in: (1) \emph{highway:} $H$ is changed to $H'$; and (2) \emph{labels:} $L$ is changed to $L'$.

To enable the parallelism on highway, we store $H$ using a \emph{highway matrix} such that $h_{ij}=h_{ji}$ for each pair of landmarks $(r_i,r_j)$. Then, searches can be conducted in parallel to update the entries in this highway matrix. In the second case, for any vertex $v$, distance entries in $L(v)$ w.r.t. different landmarks are disjoint subsets. Thus updating distance entries in $L(v)$ w.r.t. different landmarks can be processed in parallel. Putting it all together, for any batch update, we run batch search and batch repair w.r.t. each landmark in parallel to speed up the performance. 

\medskip
\noindent\textbf{Directed and weighted graphs.~}
Our methods can be extended to directed and non-negative weighted graphs. For directed graphs, we use $d_G(s, t)$ to refer to the distance from vertex $s$ to vertex $t$ and store two sets of labels for each vertex $v$, forward $L_f(v)$ and backward $L_b(v)$ labels, containing pairs $(r_i, \delta_{r_iv})$ after performing forward and backward pruned BFSs w.r.t. every $r_i \in R$. Accordingly, we store forward $H_f = (R, \delta_{H_f})$ and backward highway $H_b = (R, \delta_{H_b})$, where for any two landmarks $\{r_i, r_j\} \in R$, $\delta_{H_f}(r_i, r_j) = d_G(r_i, r_j)$ and $\delta_{H_b}(r_i, r_j) = d_G(r_i, r_j)$. To repair the affected labels and highways affected by a batch update, we perform our batch search and batch repair methods twice: once in the forward direction and once in the backward direction. Then the upper bound for a distance query $(s, t)$ can be computed using $L_f(s)$, $L_b(t)$, $\delta_{H_f}$ and $\delta_{H_b}$ in the same way as described in Equation \ref{eq:upper-distance-bound}. \n{For weighted graphs, we can use pruned Dijkstra’s algorithm in place of pruned BFSs. We consider updates in the form of edge weight increase or decrease instead of edge insertion or deletion. 
Our methods can then handle weight increases in a similar way to edge deletions, and weight decreases in a similar way to edge insertions.}

\newcommand{\bhlpar}{\textsc{BHL}^\textsc{\hspace{-1pt}P}}

\section{Experiments}\label{sec:experiments}
We have implemented our algorithm to experimentally verify its efficiency and scalability on real-world large networks.

\vspace{-0.2cm}
\subsection{Experimental Setup}
In our experiments, all algorithms are implemented in C++11 and compiled with g++ 5.5.0 using the -O3 option. All the experiments are performed on a Linux server Intel Xeon W-2175 (2.50GHz CPU) with 28 cores and 512GB of main memory. 

\begin{table}[t]
  \centering
  \caption{Summary of datasets.}\vspace{-0.3cm}
  \label{table:datasets}
    \begin{tabular}{| l l | r r r r |} 
      \hline
      Dataset & Type & $|V|$ & $|E|$ & avg. deg & max. deg \\
      \hline\hline
      Youtube & social & 1.1M & 3M & 5.265 & 28754 \\
      Skitter & comp & 1.7M & 11M & 13.08 & 35455 \\
      Flickr & social & 1.7M & 16M & 18.13 & 27224 \\
      Wikitalk & comm & 2.4M & 5M & 3.890 & 100029 \\
      Hollywood & social & 1.1M & 114M & 98.91 & 11467 \\
      Orkut & social & 3.1M & 117M & 76.28 & 33313 \\
      Enwiki & social & 4.2M & 101M & 43.75 & 432260 \\
      Livejournal & social & 4.8M & 69M & 17.68 & 20333 \\
      Indochina & web & 7.4M & 194M & 40.73 & 256425 \\
      Twitter & social & 42M & 1.5B & 57.74 & 2997487 \\
      Friendster & social & 66M & 1.8B & 55.06 & 5214 \\ 
      UK & web & 106M & 3.7B & 62.77 & 979738 \\\hline
      \n{Italianwiki} & \n{social} & \n{1.2M} & \n{35M} & \n{33.25} & \n{81090} \\
      \n{Frenchwiki} & \n{social} & \n{2.2M} & \n{59M} & \n{26.36} & \n{137021} \\
      \hline
    \end{tabular}\vspace*{-0.4cm}
\end{table}

\vspace{0.1cm}
\noindent\textbf{Baseline methods.} We consider the following variants of our batch dynamic algorithm, (1) \textsc{BHL}: which uses the batch search described in Algorithm~\ref{algo:affected} and the batch repair described in Algorithm~\ref{algo:algo-update}, (2) $\textsc{BHL}^+$: which uses the improved batch search described in Algorithm~\ref{algo:new-affected} and the batch repair described in Algorithm~\ref{algo:algo-update}, and (3) $\textsc{BHL}^p$: which is a parallel variant of $\textsc{BHL}^+$. We compare these variants with the state-of-the-art methods as follows:
\begin{itemize}[leftmargin=*]
    \item[--] \textsc{FulFD} \cite{hayashi2016fully}: \n{A fully dynamic method that incorporates two algorithms \textsc{IncFD} and \textsc{DecFD} to update distance labelling for edge insertions and deletions, and then} combines it with a graph traversal algorithm to answer distance queries.
    \item[--] \textsc{FulPLL} \cite{d2019fully}: A fully dynamic 2-hop cover labelling method which is composed of two separate dynamic algorithms. The first algorithm was proposed in \cite{akiba2014dynamic} to answer distance queries on graphs undergoing edge insertions and the second algorithm was proposed in \cite{d2019fully} to answer distance queries on graphs undergoing edge deletions. This method is based on the pruned landmark labelling (PLL) \cite{akiba2013fast}.
    \item[--] \textsc{PSL*} \cite{li2019scaling}: A parallel algorithm which constructs pruned landmark labelling for static graphs to answer distance queries.
    \item[--] BiBFS \cite{hayashi2016fully}: An online bidirectional BFS algorithm which answers distance queries using an optimized strategy to expand searches from the direction with fewer vertices.
\end{itemize}
Note that \textsc{FulFD} and \textsc{FulPLL} can handle only a single edge insertion/deletion at a time. Thus, for a fair comparison, we also consider a unit-update variant of our algorithm: treating our method $\textsc{BHL}^+$ in the unit update setting by performing one update at a time. We call this unit-update variant as $\textsc{UHL}^+$. The code for \textsc{FulFD}, \textsc{FulPLL} and \textsc{PSL*} was provided by their authors and implemented in C++. We use the same parameter settings as suggested by their authors unless otherwise stated. \n{For a fair comparison, we also select high degree landmarks and set them to 20 in the same way as \textsc{FulFD} for our methods.} We set the number of threads to 20 for \textsc{PSL*} as well as for the parallel variant of our method $\textsc{BHL}^p$.

\vspace{0.1cm}
\noindent\textbf{Datasets.}~ \n{We use 14 large real-world networks from a variety of domains to verify the efficiency, scalability and robustness of our algorithm. Among them, Italianwiki and Frenchwiki are two real dynamic networks whose topology evolves over time.} We treat these networks as undirected and unweighted graphs, and their statistics are summarized in Table \ref{table:datasets}. They are accessible at Stanford Network Analysis Project \cite{leskovec2015snap}, Laboratory for Web Algorithmics \cite{BoVWFI}, \n{and the Koblenz Network Collection \cite{kunegis2013konect}.}

\begin{table*}[ht]
\centering
\caption{Comparing update time of our methods $\textsc{BHL}^+$, \textsc{BHL} and $\textsc{BHL}^p$ with the state-of-the-art dynamic methods, where the batch size is 1,000 and thus the update time reported for every method is for 1,000 updates. 
}\vspace{-0.3cm}
\label{table:updates_performance_2}
\resizebox{\textwidth}{\height}{
\begin{tabular}{| l || l l l l l l | l l l l l | l l l l l |}  \hline
    \multirow{2}{*}{Dataset}&\multicolumn{6}{c|}{Fully Dynamic Batch Update Time (sec.)}&\multicolumn{5}{c|}{Incremental Batch Update Time (sec.)}&\multicolumn{5}{c|}{Decremental Batch Update Time (sec.)} \\\cline{2-17}
    & $\textsc{BHL}^p$ \hspace{-0.5cm} & $\textsc{BHL}^+$ \hspace{-0.5cm} & \textsc{BHL} \hspace{-0.5cm} & $\textsc{UHL}^+$ \hspace{-0.5cm} & \textsc{FulFD} \hspace{-0.5cm} & \textsc{FulPLL} \hspace{-0.2cm} & $\textsc{BHL}^p$ \hspace{-0.5cm} & $\textsc{BHL}^+$ \hspace{-0.5cm} & $\textsc{UHL}^+$ \hspace{-0.5cm} \hspace{-0.5cm} & \textsc{IncFD} \hspace{-0.5cm} & \textsc{IncPLL}\hspace{-0.2cm} & $\textsc{BHL}^p$ \hspace{-0.5cm} & $\textsc{BHL}^+$ \hspace{-0.5cm} & $\textsc{UHL}^+$ \hspace{-0.5cm} & \textsc{DecFD \hspace{-0.5cm}} & \textsc{DecPLL}\hspace{-0.2cm} \\
    
    \hline\hline
    Youtube & 0.046 & 0.070 & 0.208 & 0.091 & 1.249 & \n{9.110} & 0.003 & 0.008 & 0.048 & 0.154 & 0.194 & 0.070 & 0.169 & 0.239 & 3.181 & \n{9.850} \\
    Skitter & 0.147 & 0.601 & 0.902 & 1.587 & 5.986 & \n{8.770} & 0.002 & 0.006 & 0.069 & 0.117 & 1.312 & 0.163 & 0.751 & 2.382 & 14.15 & \n{31.50} \\
    Flickr & 0.024 & 0.026 & 0.130 & 0.099 & 2.152 & \n{6.300} & 0.003 & 0.008 & 0.072  & 0.053 & 1.259 & 0.030 & 0.041 & 0.107 & 3.364 & \n{13.40} \\
    Wikitalk & 0.029 & 0.025 & 0.101 & 0.134 & 2.926 & \n{4.550} & 0.002 & 0.005 & 0.097 & 0.029 & 0.081 & 0.046 & 0.044 & 0.147 & 5.674 & \n{9.820} \\
    Hollywood & 0.008 & 0.014 & 0.115 & 0.056 & 4.423 & - & 0.001 & 0.002 & 0.046 & 0.090 & 27.53 & 0.017 & 0.031 & 0.071 & 8.401 & - \\
    Orkut & 0.537 & 1.775 & 5.855 & 4.539 & 13.30 & - & 0.005 & 0.014 & 0.127 & 0.367 & - & 0.677 & 0.035 & 5.921 & 23.94 & - \\
    Enwiki & 0.508 & 1.681 & 10.50 & 3.952 & 121.7 & - & 0.008 & 0.012 & 0.168 & 0.316 & 4.916 & 0.770 & 3.079 & 8.194 & 251.2 & - \\
    Livejournal & 0.221 & 0.306 & 0.873 & 0.379 & 4.736 & - & 0.006 & 0.010 & 0.202 & 0.244 & - & 0.299 & 0.570 & 0.731 & 4.736 & - \\
    Indochina & 0.543 & 1.181 & 1.547 & 9.575 & 20.63 & - & 0.015 & 0.011 & 0.308 & 0.141 & 4.680 & 0.553 & 1.346 & 19.20 & 44.92 & - \\ 
    Twitter & 13.29 & 49.62 & 115.7 & 125.6 & 5103 & - & 0.125 & 0.024 & 13.09 & 0.263 & - & 19.17 & 68.85 & 231.8 & 9460 & - \\
    Friendster & 0.409 & 0.410 & 0.811 & 21.93 & 23.27 & - & 0.163 & 0.035 & 20.96 & 0.254 & - & 0.420 & 0.738 & 21.87 & 30.38 & - \\
    UK & 14.45 & 41.46 & 40.79  & 56.50 & 110.1 & - & 0.218 & 0.055 & 4.349 & 0.258 & - & 14.99 & 42.29 & 75.20 & 257.3 & - \\\hline
    \n{Italianwiki} & \n{0.001} & \n{0.001} & \n{0.025}  & \n{0.051} & \n{6.623} & \n{-} & \n{-} & \n{-} & \n{-} & \n{-} & \n{-} & \n{-} & \n{-} & \n{-} & \n{-} & \n{-} \\
    \n{Frenchwiki} & \n{0.003} & \n{0.004} & \n{0.067}  & \n{0.098} & \n{5.289} & \n{-} & \n{-} & \n{-} & \n{-} & \n{-} & \n{-} & \n{-} & \n{-} & \n{-} & \n{-} & \n{-} \\\hline
\end{tabular}}\vspace{-0.3cm}
\end{table*}
\begin{figure}[ht]
\centering
\includegraphics[width=0.48\textwidth]{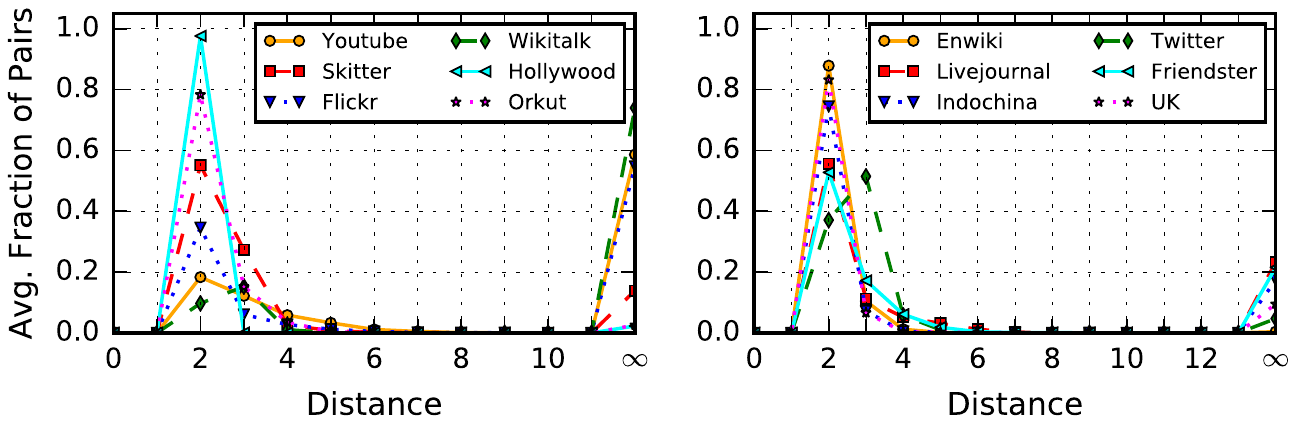} \vspace*{-0.8cm}
\caption{Distance distribution of batch updates.}
\label{fig:distance-distribution}\vspace{-0.4cm}
\end{figure}

\vspace{0.1cm}
\noindent\textbf{Test data generation.~}
For our batch dynamic variants, \n{we generate 10 batches for the first 12 datasets}, where each batch contains 1,000 edges randomly selected. We use three batch update settings for testing: (1) \emph{decremental} - delete these batches and measure the average deletion time, (2) \emph{incremental} - add these batches followed by decremental updates and measure the average insertion time, and (1) \emph{fully dynamic} - randomly select 50\% updates in each of these 10 batches to delete and then measure the average update time after applying these batches. \n{For the last two datasets, we select 10 batches in the order of their timestamps, each containing 1,000 real-world inserted/deleted edges and measure the average update time after applying them in a streaming fashion.}

For the methods \textsc{FulFD}, \textsc{FulPLL} and $\textsc{UHL}^+$, we randomly sample 1000 edges and follow the same update settings as above to measure the update time of performing updates one by one. These settings enable us to explore the impacts of edge insertions and edge deletions respectively, in addition to their combined impact. In Figure \ref{fig:distance-distribution}, we report the distance distribution of edges in these batches after deleting. As we can see, the distances in all datasets are small ranging from 1 to 6. This shows that the updates are mostly from densely connected components of these networks which may cause fewer vertices to be affected in the \emph{incremental} setting. Further, only a small number of updates are disconnected (i.e., have distance $\infty$) in most of these datasets. 

For queries, we randomly sample 100,000 pairs of vertices in each dataset to evaluate the average querying time on graphs being changed as a result of fully dynamic batch updates. We also report the average size of labelling in the fully dynamic setting.

\begin{table*}[ht]
\centering
\caption{Comparing performance of our method $\textsc{BHL}^+$ with the baseline methods in terms of construction time, query time and labelling size. Note that when a method did not finish the labelling construction in 24 hours, we denote it as ``-''. }\vspace{-0.3cm}
\label{table:updates_performance_1}
\begin{tabular}{| l || r r r r | l l l l | r r r r |}  \hline
    \multirow{2}{*}{Dataset}&\multicolumn{4}{c|}{Construction Time (CT) [s]}&\multicolumn{4}{c|}{Query Time (QT) [ms]}&\multicolumn{4}{c|}{Labelling Size (LS)} \\\cline{2-13}
    & $\textsc{BHL}^+$ & \textsc{FulFD} & \textsc{FulPLL} & \textsc{PSL*} & $\textsc{BHL}^+$ & \textsc{FulFD} & \textsc{FulPLL} & \textsc{PSL*} & $\textsc{BHL}^+$ & \textsc{FulFD} & \textsc{FulPLL} & \textsc{PSL*} \\
    
    \hline\hline
    Youtube & 2 & 4 & 84 & 4 & 0.005 & 0.010 & 0.045 & 0.002 & 20 MB & 83 MB & 3.14 GB & 318 MB \\
    Skitter & 3 & 8 & 511 & 21 & 0.029 & 0.020 & 0.082 & 0.007 & 42 MB & 153 MB & 11.9 GB & 1.01 GB \\
    Flickr & 3 & 10 & 546 & 23 & 0.007 & 0.013 & 0.102 & 0.005 & 34 MB & 152 MB & 13.1 GB & 0.98 GB \\
    Wikitalk & 2 & 5 & 92 & 4 & 0.006 & 0.008 & 0.031 & 0.001 & 41 MB & 74 MB & 5.22 GB & 160 MB \\
    Hollywood & 6 & 24 & 9,782 & 377 & 0.026 & 0.036 & - & 0.143 &  27 MB & 263 MB & - & 4.15 GB \\
    Orkut & 24 & 88 & - & 26,310 & 0.102 & 0.156 & - & 0.203 & 70 MB & 711 MB & - & 121 GB \\
    Enwiki & 25 & 88 & 7,382 & 389 & 0.053 & 0.051 & - & 0.021 & 82 MB & 608 MB & - & 7.04 GB \\
    Livejournal & 20 & 46 & - & 4,441 & 0.043 & 0.051 & - & 0.047 & 122 MB & 663 MB & - & 50.5 GB \\
    Indochina & 9 & 30 & 3,205 & 86 & 0.788 & 0.767 & - & 0.007 & 85 MB & 838 MB & - & 3.39 GB \\
    Twitter & 549 & 1,928 & - & - &0.868 & 0.174 & - & - & 1.14 GB & 3.83 GB & - & - \\
    Friendster & 1,181 & 3,365 & - & - & 0.815 & 0.902 & - & - & 2.43 GB & 9.14 GB & - & - \\
    UK & 178 & 621 & - & - & 1.174 & 5.233 & - & - & 1.78 GB & 11.8 GB & - & - \\\hline
    \n{Italianwiki} & \n{6} & \n{15} & - & \n{215} & \n{0.008} & \n{0.014} & - & \n{0.006} & \n{23 MB} & \n{159 MB} & - & \n{0.81 GB} \\
    \n{Frenchwiki} & \n{11} & \n{25} & - & \n{433} & \n{0.009} & \n{0.016} & - & \n{0.006} & \n{46 MB} & \n{272 MB} & - & \n{1.54 GB} \\
    \hline
\end{tabular}\vspace{-0.3cm}
\end{table*}
\begin{table*}[ht!]
 \centering
 \renewcommand{\arraystretch}{1.1} 
 \caption{Average number of vertices affected by BHL$^+$ and BHL after performing batch updates on all the datasets.}\vspace{-0.3cm}
 \label{table:avg_affected_vertices}
 \resizebox{\textwidth}{\height}{\begin{tabular}{| l | l || r r r r r r r r r r r r r r |}  \hline
    Method& Type & Youtube & Skitter & Flickr & Wikitalk & Hollywood & Orkut & Enwiki & Livejournal & Indochina & Twitter & Friendster & UK & \n{Italianwiki} & \n{Frenchwiki}  \\
    \hline\hline
    \multirow{3}{*}{}
    & Delete & 366 K	&	971 K	&	55 K	&	127 K	&	14 K	&	503 K	&	1,220 K	&	276 K	&	2,079 K	&	10,622 K	&	66 K	&	54,515 K & \n{-} & \n{-} \\
    $\textsc{BHL}^+$ & Add & 23 K	&	11 K	&	22 K	&	16 K	&	2 K	&	3 K	&	4 K	&	12 K	&	15 K	&	2 K	&	6 K	&	12 K & \n{-} & \n{-} \\
    & Mix & 166 K	&	834 K	&	42 K	&	81 K	&	7 K	&	293 K	&	712 K	&	156 K	&	200 K	&	8,341 K	&	36 K	&	54,026 K & \n{337} & \n{3 K} \\\hline
    \multirow{1}{*}{}
    \textsc{BHL} & Mix & 476 K	&	1,266 K	&	157 K	&	474 K	&	41 K	&	982 K	&	3,587 K	&	454 K	&	3,085 K	&	20,705 K	&	80 K	&	54,864 K & \n{9 K} & \n{45 K} \\\hline
 \end{tabular}}\vspace{-0.3cm}
\end{table*}

\vspace{-0.2cm}
\subsection{Performance Comparison}\label{subsec:Q1}

\subsubsection{Update Time}
Tables \ref{table:updates_performance_2} and \ref{table:updates_performance_1} show the average update time of our proposed and the baseline methods. 

\vspace{0.1cm}
\noindent\textbf{Fully dynamic setting.~}From Table \ref{table:updates_performance_2}, we see that our proposed methods $\textsc{BHL}^p$, $\textsc{BHL}^+$, and \textsc{BHL} significantly outperform \textsc{FulFD} and \textsc{FulPLL} on all datasets w.r.t. update time. In particular, our methods $\textsc{BHL}^p$ and $\textsc{BHL}^+$ are over 15 times faster than \textsc{FulFD} on most of the datasets and several orders of magnitude faster than \textsc{FulPLL}. \textsc{FulPLL} only works on four graphs and fails to scale to large graphs with more than 100 millions. Further, the performance difference of $\textsc{BHL}^+$ and \textsc{BHL} is due to the fact that our improved batch search in $\textsc{BHL}^+$ can further prune away affected vertices that do not need to be repaired, and in practice they are significant in amount as can be seen in Table \ref{table:avg_affected_vertices}. \n{Our methods also significantly outperform \textsc{FuLFD} on the real-world dynamic networks: Italianwiki and Frenchwiki.} We can also observe that the average update time of $\textsc{BHL}^+$, \textsc{BHL} and $\textsc{BHL}^p$ is always by far smaller than recomputing labelling from scratch, i.e., construction time of $\textsc{BHL}^+$ in Table \ref{table:updates_performance_1}.
Notice that, we consider the same construction time for $\textsc{BHL}^p$ and \textsc{BHL} as $\textsc{BHL}^+$, which is smaller than the construction time of baseline methods \textsc{FulFD} \cite{hayashi2016fully} and \textsc{PSL*} \cite{li2019scaling} on all datasets. We can see that the parallel variant of PLL (\textsc{PSL*}) still failed to construct labelling for the largest three datasets.

\vspace{0.1cm}
\noindent\textbf{Incremental setting.~}Table \ref{table:updates_performance_2} also shows that our methods $\textsc{BHL}^+$, $\textsc{BHL}^p$ are considerably faster than the baseline methods \textsc{IncFD} and \textsc{IncPLL}. Even though \textsc{IncFD} and \textsc{IncPLL} do not preserve the minimality of distance labellings and thus do not spend time to delete outdated and redundant label entries, they are still slower than our methods. We can also see $\textsc{BHL}^+$ and $\textsc{BHL}^p$ in the batch update setting are significantly faster than  $\textsc{UHL}^+$ in the unit update setting. This is because $\textsc{UHL}^+$ requires extra usage of resource for each single update and involves in repeated and unnecessary computations. Here it is also to note that $\textsc{BHL}^p$ does not perform well on the last four datasets as compared to \textsc{BHL}. This is because there only exist a very small number of average affected vertices against the total number of affected vertices as shown in Table \ref{table:avg_affected_vertices}. This confirms that the parallel variant of our method works very well when a large number of vertices are affected by batch updates; otherwise it may introduce unneeded thread overhead.

\vspace{0.1cm}
\noindent\textbf{Decremental setting.~}It is evident from Table \ref{table:updates_performance_2} that our methods $\textsc{BHL}^+$ and $\textsc{BHL}^p$ are much faster than \textsc{DecFD} and \textsc{DecPLL} on all the datasets in this setting. Especially, \textsc{BHL} and $\textsc{BHL}^p$ can achieve outstanding performance on networks which have a high average degree such as Twitter, Flickr and Hollywood. Due to inherent complexity of edge deletion on graphs (i.e., increasing distances), \textsc{DecFD} and \textsc{DecPLL} take very long in identifying and updating labels of affected vertices. As we can see, \textsc{DecPLL} does not have results on 8 out of 12 datasets. This is because while applying decremental updates their software either crashed or did not finish when the datasets are large that is why we don't have query time and labelling size after updates for these datasets in Table \ref{table:updates_performance_1}.  Furthermore, our methods $\textsc{BHL}^+$ and $\textsc{BHL}^p$ outperform $\textsc{UHL}^+$ because both leverage the benefit of handling updates in a batch and significantly reduce repeated computations during identifying and repairing the labels of affected vertices.

\subsubsection{Labelling Size}\label{subsection:ls}
Table \ref{table:updates_performance_1} shows that $\textsc{BHL}^+$ has significantly smaller labelling size than \textsc{FulFD}, \textsc{FulPLL} and \textsc{PSL*} on all the datasets. When an update occurs, the labelling size of \textsc{FulFD} remains unchanged because they store complete shortest-path trees at all times. In contrast, $\textsc{BHL}^+$ stores pruned shortest-path trees preserving the property of minimality. Nonetheless, the labelling size of $\textsc{BHL}^+$ remains stable in practice because the average label size is bounded by a constant, i.e., the number of landmarks. The labelling size of \textsc{FulPLL} may increase significantly because \textsc{IncPLL} does not remove outdated and redundant distance entries and there is also no bound on labelling size. The parallel variant of PLL (\textsc{PSL*}) which exploit PLL properties to reduce labelling size still produces labelling of very large size as compared to $\textsc{BHL}^+$.

\begin{figure*}[ht!]
\centering
\includegraphics[width=0.93\textwidth]{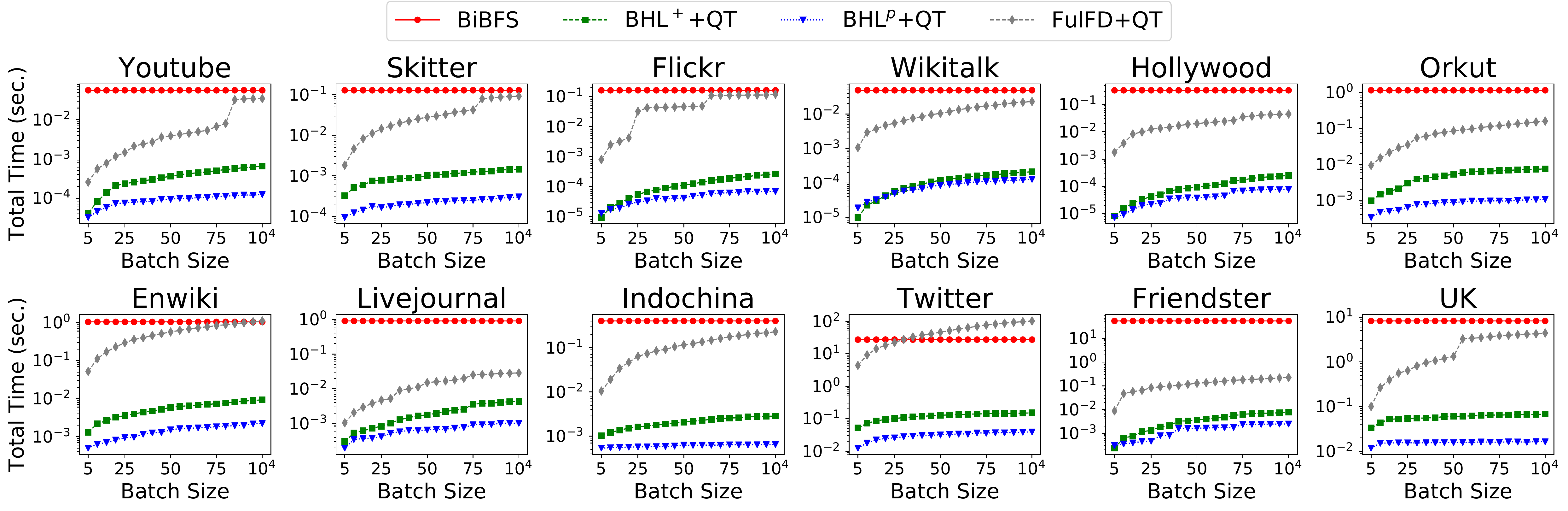}\vspace{-0.4cm}
\caption{\n{Total time of querying and updating by the proposed methods against online search methods.}}
\label{fig:scal}\vspace{-0.4cm}
\end{figure*}
\begin{figure}[ht]
\includegraphics[scale=0.31]{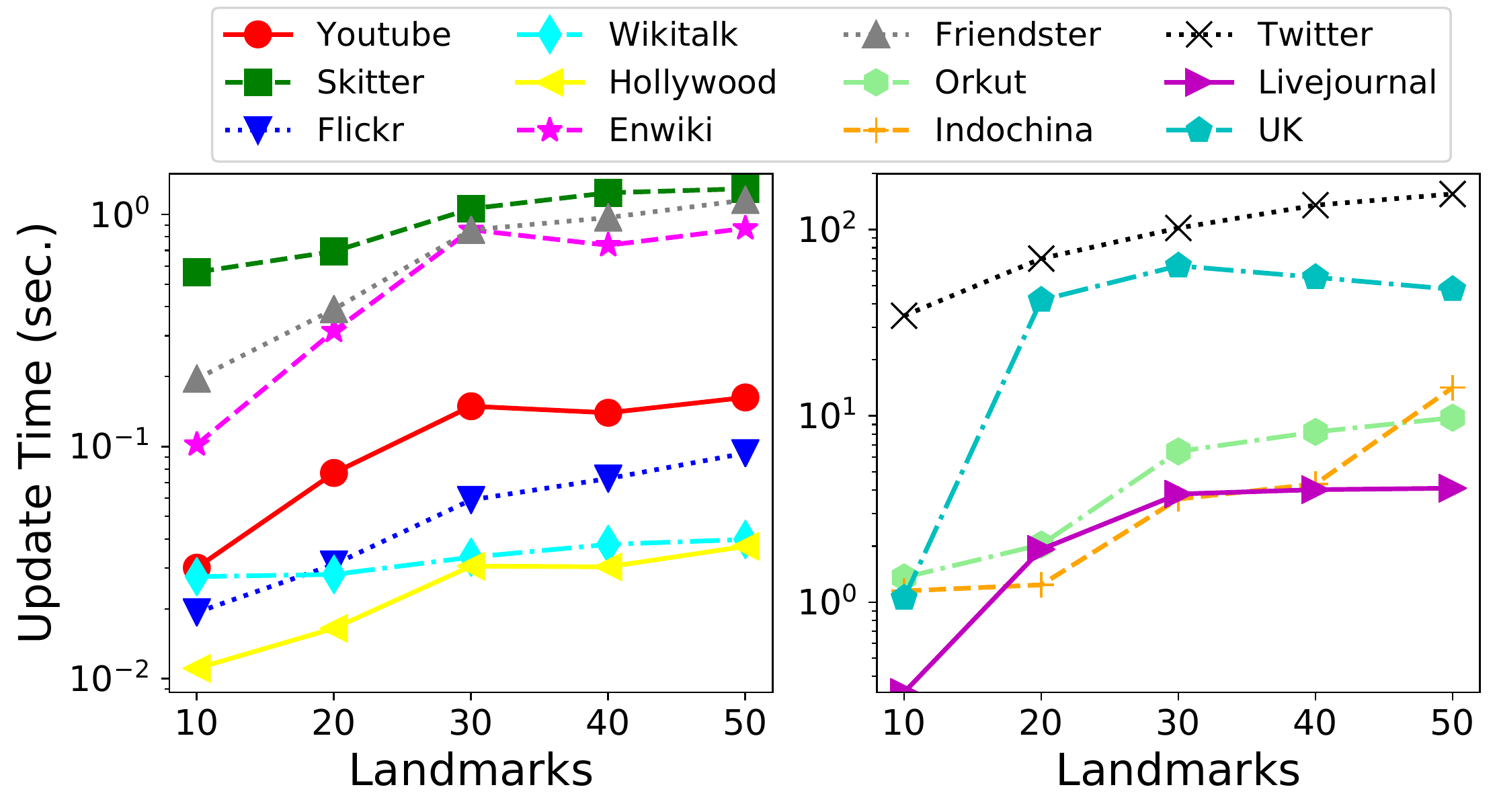}\vspace{-0.4cm}
\caption{\n{Update time under 10-50 landmarks.}}
\label{fig:update_time_varying_landmarks}
\includegraphics[scale=0.31]{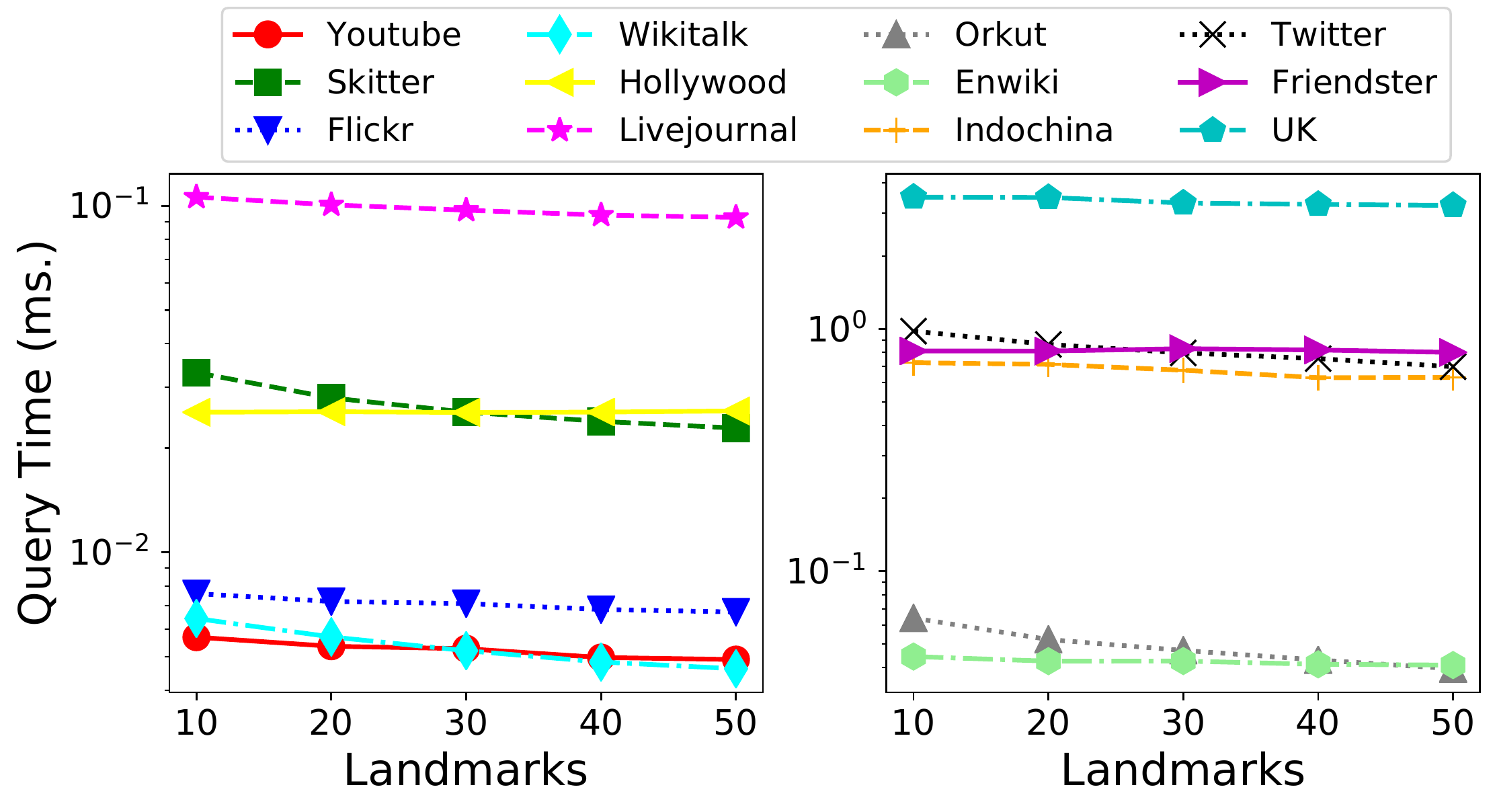}\vspace{-0.4cm}
\caption{\n{Query time under 10-50 landmarks.}}
\label{fig:query_time_varying_landmarks}\vspace*{-0.7cm}
\end{figure}

\subsubsection{Query Time}
Table \ref{table:updates_performance_1} shows that the average query time of $\textsc{BHL}^+$ is comparable with \textsc{FulFD} and faster than \textsc{FulPLL}. It has been previously shown \cite{d2019fully} that the average query time is largely dependent on labelling size. Since the dynamic operations do not considerably affect the labelling size for $\textsc{BHL}^+$ and \textsc{FulFD}, their query times remain stable. On Twitter, the query time of $\textsc{BHL}^+$ underperforms \textsc{FulFD} because \textsc{FulFD} also maintains the shortest-path information for the neighborhood of landmarks and we can see that the maximum degree of Twitter is very high which might cause many pairs to be covered by landmarks.  However, the query time for \textsc{FulPLL} may considerably increase over time because they do not remove outdated entries, leading to labelling of increasing sizes. Although the query time of \textsc{PSL*} in Table \ref{table:updates_performance_1} is better than $\textsc{BHL}^+$ on some datasets, it only handles static graphs. For dynamic graphs, it has the following limitations: (1) the cost of re-constructing labelling from scratch after each batch update is too high to afford, particularly when batch updates are frequent or when underlying dynamic graph is large which is evident from Table \ref{table:updates_performance_1}, (2) the labelling size is much larger than $\textsc{BHL}^+$. As we can see in Table \ref{table:updates_performance_1}, \textsc{PSL*} produces the labelling of size almost 99\% larger than the labelling of $\textsc{BHL}^+$ for Orkut thus possess a high query cost as well. Considering the overall performance w.r.t. three main factors i.e., query time, labelling size and construction time, $\textsc{BHL}^+$ stands out in claiming the best trade-off between these factors.

\vspace{-0.2cm}
\subsection{\n{Performance under Varying Landmarks}}\label{subsec:Q2}
\n{Figure \ref{fig:update_time_varying_landmarks} shows how the update time of our method $\textsc{BHL}^+$ in the fully dynamic setting behaves when increasing the number of landmarks. We can see that the update time for almost all datasets grows till 30 landmarks and then either decreases or remains stable. This is because selecting a larger number of landmarks can better leverage the pruning power of our method. On Twitter, we observe that the update time grows linearly due to its very high average degree which leads to a large fraction of vertices to be affected as can be seen in Table \ref{table:avg_affected_vertices} for $20$ landmarks. We can also see in Figure \ref{fig:query_time_varying_landmarks} that the query time decreases or remains the same for almost all datasets with the increased number of landmarks. Particularly, the query time of Twitter, Orkut and Livejournal decreases because they have a very high average degree and selecting a larger number of high degree landmarks contributes greatly towards shortest-path coverage and makes querying process faster.}

\vspace{-0.2cm}
\subsection{\n{Performance under Varying Batch Sizes}}\label{subsec:Q4}
\n{We also compare the total time of querying and updating on dynamic graphs. 
To make a fair comparison, the total time of our methods $\textsc{BHL}^+$ and $\textsc{BHL}^p$, and the baseline method $\textsc{FulFD}$ is the total time to perform a batch update plus the query time to perform 1000 queries after the batch update and then averaged over 1000 queries, denoted as $\textsc{BHL}^+$+QT, $\textsc{BHL}^p$+QT and $\textsc{FulFD}$+QT, respectively. We conduct the experiments for 5 randomly sampled fully dynamic batch updates of varying sizes, i.e., 500 to 10,000 in each batch. Figure \ref{fig:scal} presents the results. For the baseline method BiBFS, we take only the query time averaged over 1000 queries after applying a batch update. We see that, the overall performance of our methods is significantly better than the baseline methods on all the datasets. It is worth noticing that $\textsc{BHL}^p$ is not only more efficient than $\textsc{BHL}^+$, but also their efficiency gap becomes larger when the size of batch updates increases. This shows that the parallelism power of $\textsc{BHL}^p$ can be better leveraged for batch updates of larger sizes. We can also observe that the update time along with the query time of our methods grows fast for batches of smaller sizes (with up to 1000 updates) and then grows very slowly when batch sizes become very large which shows that our methods are robust w.r.t the increased batch size.}

\vspace{-0.2cm}
\subsection{\n{Performance on Directed Graphs}}
\n{We also conduct experiments on directed graphs. We can see in Table \ref{table:update_performance_d} that the update time of our methods is significantly smaller than the construction time of labelling from scratch. The update time of our optimized method $\textsc{BHL}^+$ is faster than the method $\textsc{BHL}$ on all datasets except Livejournal. On Livejournal, the amount of affected vertices traversed by both $\textsc{BHL}$ and $\textsc{BHL}^+$ is the same; however, due to additional overhead of computing extended landmark lengths, $\textsc{BHL}^+$ under-performs $\textsc{BHL}$. $\textsc{BHL}^p$ is still the fastest among all methods. Our methods are also efficient in performing queries and have small labelling sizes.} 

\vspace{-0.2cm}
\begin{table}[h!]
\centering
\caption{\n{Comparing update time, construction time (CT), query time (QT) and labelling size (LS) on directed graphs.}}\vspace{-0.3cm}
\label{table:update_performance_d}
\scalebox{0.85}{\begin{tabular}{| l || c c c | c c c |}  \hline
    Datasets & $\textsc{BHL}^p$[s] & $\textsc{BHL}^+$[s] & $\textsc{BHL}$[s] & CT[s] & QT[ms] & LS \\
    
    \hline\hline
    Wikitalk & 0.02 & 0.04 & 0.17 & 2.03 & 0.001 & 54 MB \\
    Enwiki & 2.98 & 12.5 & 28.0 & 46.8 & 0.023 & 177 MB \\
    Livejournal & 7.54 & 18.9 & 15.1 & 44.6 & 0.050 & 222 MB \\
    Twitter & 16.2 & 64.4 & 142 & 931 & 0.312 & 1.7 GB \\\hline
\end{tabular}}\vspace{-0.4cm}
\end{table}
\section{Conclusion}\label{sec:conclusion}
We have proposed a novel method for answering distance queries on dynamic graphs undergoing batch updates. Our proposed approach exploits properties of updates in a batch to improve efficiency of maintaining distance labelling. We have analyzed the correctness and complexity of our approach and showed that they preserve the labelling minimality. We have empirically verified the efficiency and scalability of our approach on 14 real-world networks. For future work, we plan to explore the following directions: 1) the applicability and extension of the proposed method to road networks; 2) potential optimizations in designing separate batch-dynamic algorithms for edge insertion and edge deletion in dynamic graphs.

\bibliographystyle{ACM-Reference-Format}
\balance
\bibliography{sample-base}

\end{document}